\documentclass[journal]{IEEEtran}
\usepackage[T1]{fontenc}
\usepackage[utf8]{inputenc}
\usepackage{amsmath}
\usepackage{amssymb}
\usepackage{amsthm}
\usepackage{graphicx}
\usepackage{algorithm}
\usepackage{algorithmic}
\usepackage{cite}
\usepackage[unicode=true]{hyperref}
\hypersetup{colorlinks=true, allcolors=blue}

\theoremstyle{plain}
\newtheorem{assumption}{Assumption}
\newtheorem{lemma}{Lemma}
\newtheorem{theorem}{Theorem}
\newtheorem{corollary}{Corollary}
\newtheorem{proposition}{Proposition}
\newtheorem{fact}{Fact}

\begin{document}
\title{Token-Level Attribution of Multi-Mode LLM Watermarks: Optimal Rule and Belief Weighting}
\author{Rui~Zhou, Licheng~Zhao, Chao~Shen, and Wenqiang~Pu%
\thanks{A preliminary version of this work was submitted to the 2027 IEEE International Conference on Acoustics, Speech and Signal Processing (ICASSP)~\cite{zhou2027icassp}.}
\thanks{Rui Zhou is with the City University of Hong Kong (Dongguan), Dongguan, China, and also with the City University of Hong Kong, Hong Kong (email: \href{mailto:rui.zhou@cityu-dg.edu.cn}{rui.zhou@cityu-dg.edu.cn}).}
\thanks{Licheng Zhao, Chao Shen, and Wenqiang Pu are with the Shenzhen Research Institute of Big Data, The Chinese University of Hong Kong (Shenzhen), Shenzhen, China (e-mail: \href{mailto:zhaolicheng@sribd.cn}{zhaolicheng@sribd.cn}; \href{mailto:chaoshen@sribd.cn}{chaoshen@sribd.cn}; \href{mailto:wpu@sribd.cn}{wpu@sribd.cn}).}}
\markboth{}{}
\maketitle

\begin{abstract}
A watermark verifier may hold many modes, i.e., secret keys, and a document may mix spans of different modes with unwatermarked text. Directly reporting the mode with the largest pivotal statistic (pivot) inflates the Type I error, and restoring the nominal level reduces power as candidate modes are added. 
We formulate token-level multi-mode attribution for the Gumbel-max watermark as maximizing correct-label power under a Type I error constraint given a mode prior. We show that the optimal rule reports the mode with the largest prior-weighted alternative density and rejects when this density exceeds one common threshold.
Based on this rule, we propose the Belief-weighted Likelihood Attribution with Density Estimation (BLADE) detector. It estimates the mode prior from neighboring pivots and the alternative density from reference token probabilities. Under a conditional product model of the pivots, it keeps the exact Type I error.
We further show that BLADE approaches the optimal rule as its prior and density estimates improve, with a sharp coefficient on the density error.
On OPT-1.3B and other open-weight large language models, BLADE attains higher correct-label power than the compared detectors across candidate counts, token-substitution rates, generation temperatures and generator models.

\end{abstract}

\begin{IEEEkeywords}
Large language models, watermarking, multi-mode attribution, optimal detector, neighboring evidence.
\end{IEEEkeywords}

\section{Introduction}
\label{sec:intro}

\IEEEPARstart{W}{atermarking} embeds a key-dependent statistical signal into the text generated by a large language model (LLM). A verifier holding the secret key can then test whether a given text was generated by the model \cite{kirchenbauer2023,aaronson2023,kuditipudi2023}. Existing schemes either bias the next-token prediction (NTP) distribution toward a key-dependent subset of the vocabulary, e.g., the green-list watermark \cite{kirchenbauer2023,zhao2024provable}, or preserve it in expectation over the pseudorandomness \cite{aaronson2023,kuditipudi2023,hu2024unbiased,wu2024resilient,dathathri2024scalable}. The latter class includes the Gumbel-max watermark \cite{aaronson2023,li2025,fernandez2023three}. For the Gumbel-max watermark, the detector computes at each token a pivotal statistic (pivot), namely the keyed pseudorandom value of the observed token. It then tests whether the pivot follows the uniform distribution of unwatermarked text or the stochastically larger distribution of watermarked text \cite{li2025}.
These schemes and their detectors, however, mostly consider a single key and make a binary decision about a whole text.

Many practical applications require a finer answer. A model provider may assign different keys to different users or, for a single user, to different products or usage states, e.g., polishing a human draft versus authoring a manuscript from scratch. We refer to each key, together with the user, product or state that it identifies, as a watermark mode. The text submitted to a verifier may then mix unwatermarked spans, as in \cite{li2024edits,li2025optimal}, with spans generated under different modes. The unwatermarked spans may be written by a human, generated by the model without a watermark, or generated with a watermark that is no longer retained after later editing. We call such a text a mixed document.
The verifier must decide, at each token, whether a watermark is retained and, if so, under which of $K$ candidate modes it was generated. We refer to this problem as multi-mode attribution. Since a detection with a wrong mode label is an attribution error, the relevant performance measure is the correct-label power, i.e., the probability of detecting a retained watermark and assigning it to the correct mode.

A direct approach computes one pivot per candidate mode and reports the mode with the largest pivot whenever that pivot exceeds a threshold. We call this rule Max-Pivot. However, testing several candidates at once inflates the overall Type I error above the nominal level of each individual test, since each candidate pivot may exceed the threshold under the null hypothesis.
Restoring the overall nominal level requires a stricter threshold for each candidate, i.e., a multiplicity correction, and we call the resulting rule calibrated Max-Pivot. This threshold increases as candidates are added, lowering the probability that the pivot of the true mode exceeds it.
In addition, a correct detection requires the pivot of the true mode to exceed the pivots of all other modes, which also becomes less likely as candidates are added. Both effects grow with the number of candidates, because the candidate modes share one null hypothesis and one Type I error budget at every token.
We refer to these effects as the competition among modes. This competition characterizes multi-mode attribution and is absent from single-mode detection.

Several methods use multiple keys or identify the source of a text, but they are not designed for the multi-mode attribution problem considered here. Multi-user watermarks embed a user-specific signal to trace a text to an individual user or to members of a colluding group \cite{cohen2025many}. Watermark-based attribution assigns each user a distinct watermark selected to improve the accuracy of identifying the user \cite{jiang2026attribution}. Randomized key selection uses several keys and accepts a text when exactly one of them is detected \cite{aremu2026}. Multi-bit watermarks embed a message, e.g., a user identifier, and recover it by aggregating evidence across many tokens \cite{yoo2024advancing}. Methods that localize watermarked segments in mixed-source text \cite{segment2024,zhao2025,pan2024waterseeker,bonnerjee2025wiser,blanchet2026} label positions within mixed documents but consider a single key. Most of the multi-key methods above assume that the whole text carries only one watermark, and even the collusion-resistant scheme \cite{cohen2025many} identifies users rather than labeling tokens. The multi-key methods therefore determine which user or key produced a whole text, whereas multi-mode attribution must determine which mode, if any, generated each token.

Within a source span, neighboring tokens usually share a mode, so the neighbors of a token provide evidence for identifying its mode more accurately. In our preliminary work \cite{zhou2027icassp}, we proposed a Belief-weighted (BW) detector for the Gumbel-max watermark. It allocates the Type I error budget across the candidates with weights given by mode probabilities from a hidden Markov model of local persistence \cite{rabiner1989}. On mixed documents, it improved the correct-label power over calibrated Max-Pivot.
However, the BW detector and its weights were designed heuristically rather than derived from an optimality criterion.

In this paper, we cast multi-mode attribution as a constrained optimization problem over the decision rule, in which the neighboring evidence enters through the mode prior, i.e., the prior probability of each mode at the target token. Under a conditional product model of the pivots, the optimal rule has an explicit form. The main contributions are as follows:

\begin{itemize}
\item We formulate token-level multi-mode attribution for the Gumbel-max watermark as maximizing the prior-weighted correct-label power under a Type I error constraint given the mode prior. We show that the optimal rule reports the mode with the largest prior-weighted alternative density and rejects when this density exceeds one common threshold. Under the uniform product null of the pivots, this threshold splits the Type I error budget among the modes and also bounds the misattribution rate.
\item Based on the optimal rule, we propose the Belief-weighted Likelihood Attribution with Density Estimation (BLADE) detector. It estimates the required alternative density from reference token probabilities and the mode prior with a working Markov chain that includes a null state. Under the conditional product model, it keeps the exact Type I error for any such estimates formed from the other positions and independent reference data.
\item We bound the loss of BLADE relative to the optimum in terms of the prior and density estimation errors, and show that the coefficient of the density error is sharp.
\item We evaluate the detectors on OPT-1.3B and other open-weight LLMs. BLADE attains higher correct-label power than calibrated Max-Pivot, the Local-Count weighting of \cite{zhou2027icassp} and BW across candidate counts, token-substitution rates, generation temperatures and generator models. It also keeps the Type I error near the nominal level on human-written and edited text.
\end{itemize}

This paper is organized as follows. Section~\ref{sec:prelim} reviews single-mode Gumbel-max watermarking and its pivot distribution. Section~\ref{sec:formulation} formulates multi-mode attribution, and Section~\ref{sec:optimal} derives the optimal rule. Section~\ref{sec:detectors} constructs BW and BLADE from the optimal rule and bounds BLADE's loss from estimated inputs. Section~\ref{sec:experiments} presents numerical experiments, and Section~\ref{sec:conclusion} concludes the paper.

\emph{Notation:} Boldface lowercase and uppercase letters denote vectors and matrices, respectively. $\Pr(\cdot)$ and $\mathbb{E}[\cdot]$ denote probability and expectation, $\operatorname{Unif}(0,1)$ is the uniform distribution on $[0,1]$, and $\mathbb{I}\{\cdot\}$ is the indicator function. $\mathbf 1_K$ and $\mathbf I_K$ are the $K$-dimensional all-ones vector and identity matrix, $(\cdot)^{\top}$ denotes transposition, $\odot$ is the elementwise product, and $\operatorname{norm}(\mathbf x)=\mathbf x/\sum_ix_i$ for a nonnegative vector $\mathbf x$ with a positive sum. For $\mathbf{x}\in\mathbb{R}^K$, $\lVert\mathbf{x}\rVert_1=\sum_j\lvert x_j\rvert$, and for an integrable function $g$ on $[0,1]$, $\lVert g\rVert_{L^1}=\int_0^1\lvert g(x)\rvert\,dx$.
\section{Preliminaries: Single-Mode Gumbel-Max Watermarking}
\label{sec:prelim}

\subsection{Gumbel-Max Watermarking and the Single-Token Test}
\label{sec:prelim-gumbel}

The Gumbel-max watermark \cite{aaronson2023,li2025} is specified by a secret key $\kappa$ shared by the model provider and the verifier \cite{kuditipudi2023,li2024edits}. At position $t$, the LLM computes an NTP distribution $\mathbf{p}_t=(p_{t,w})_{w\in\mathcal{V}}$ over a vocabulary $\mathcal{V}$, after any temperature scaling or truncation. A keyed hash of the context window $C_t$, e.g., the preceding $m\ge1$ tokens with $m=5$ in Section~\ref{sec:experiments}, defines a mapping $\mathcal A_\kappa$. This mapping assigns each token $w\in\mathcal V$ a pseudorandom value $U_{t,w}=\mathcal A_\kappa(w\mid C_t)\in[0,1]$, and the next token is
\begin{equation}
\label{eq:gumbel-max}
X_t=\arg\max_{w\in\mathcal{V}:\,p_{t,w}>0}\ \frac{\log U_{t,w}}{p_{t,w}}.
\end{equation}
The statistical properties of the watermark depend on the distribution of the pseudorandom values $U_{t,w}$. We adopt the following assumption, which extends the standard one \cite{aaronson2023,li2025} to multiple keys.

\begin{assumption}
\label{ass:pseudorandom}
The pseudorandom values $U_{t,w}$ are independent $\operatorname{Unif}(0,1)$ variables across vocabulary entries, keys and distinct context windows.
\end{assumption}

At positions with a previously unseen context window \cite{cai2025collision}, Assumption~\ref{ass:pseudorandom} makes \eqref{eq:gumbel-max} an exact sample from $\mathbf{p}_t$, so the watermark preserves the NTP distribution \cite{aaronson2023,li2025}.
To detect the presence of the watermark, the verifier evaluates the keyed mapping at the selected token to obtain the pivot \cite{li2025}
\begin{equation}
\label{eq:pivot}
r_t=\mathcal{A}_\kappa(X_t\mid C_t)=U_{t,X_t}\in[0,1].
\end{equation}

If the token is chosen independently of the pseudorandom values, e.g., in human-written text, then $r_t\sim F_0$ with cumulative distribution function (CDF) $F_0(x)=x$. If the token is generated by \eqref{eq:gumbel-max}, $r_t$ follows a stochastically larger distribution $F_H$ with a nondecreasing density. Each position thus poses the one-sided test
\begin{equation}
\label{eq:single-test}
\begin{cases}
\mathcal{H}_0^{(t)}&:\ r_t\sim F_0,\\
\mathcal{H}_1^{(t)}&:\ r_t\sim F_H.
\end{cases}
\end{equation}
Under $\mathcal{H}_0^{(t)}$, $1-r_t$ is an exact $p$-value. Since the likelihood ratio of $F_H$ to $F_0$ is nondecreasing, the most powerful level-$\alpha$ test \cite{lehmann2005} is the single-mode rule
\begin{equation}
\label{eq:single-rule}
\begin{cases}
\mathcal{H}_0^{(t)}, & \text{if }r_t\le1-\alpha,\\[2pt]
\mathcal{H}_1^{(t)}, & \text{if }r_t>1-\alpha,
\end{cases}
\end{equation}
with power $1-F_H(1-\alpha)$.

\subsection{Alternative Distribution of the Pivot}
\label{sec:prelim-signal}
Under $\mathcal{H}_1^{(t)}$, the pivot depends on the generator only through the NTP probability of the selected token, $p_t^\star=p_{t,X_t}$. Let $d_t=1/p_t^\star$. The following lemma restates a result of \cite{li2025}, and the subsequent corollary builds on the mixture distribution of \cite{fernandez2023three,li2025}.

\begin{lemma}[{[\citen{li2025}]}]
\label{lem:pivot-law}
Under Assumption~\ref{ass:pseudorandom}, if the watermark generated $X_t$ at a position with a previously unseen context window, then the conditional distribution of the pivot given $p_t^\star$ is
\begin{equation}
\label{eq:pivot-law}
\Pr\bigl(r_t\le x\mid p_t^\star\bigr)=x^{d_t},\qquad0\le x\le1.
\end{equation}
\end{lemma}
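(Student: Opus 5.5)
We prove the lemma by computing the joint law of the selected token and its pseudorandom value, then conditioning on the event that the selected token has probability $p_t^\star$.

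\textbf{Step 1 (exponential reformulation).} Fix the NTP distribution $\mathbf p_t$ and write $E_w=-\log U_{t,w}$. Under Assumption~\ref{ass:pseudorandom}, and because the context window is previously unseen, the variables $E_w$ are i.i.d.\ standard exponential. Maximizing $\log U_{t,w}/p_{t,w}$ is the same as minimizing $E_w/p_{t,w}$. Each $Y_w=E_w/p_{t,w}$ is exponential with rate $p_{t,w}$, and the $Y_w$ are independent.

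\textbf{Step 2 (joint law of the winner and its value).} By the standard competing-exponentials computation,
\[
\Pr\bigl(X_t=w,\ \min_v Y_v>y\bigr)=\frac{p_{t,w}}{\sum_v p_{t,v}}\,e^{-y\sum_v p_{t,v}}=p_{t,w}\,e^{-y},
\]
since $\sum_v p_{t,v}=1$. The identity of the winner is therefore independent of $Y^\star=\min_v Y_v\sim\operatorname{Exp}(1)$, and the winner is distributed as $\mathbf p_t$.

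\textbf{Step 3 (conditioning).} On $\{X_t=w\}$ the pivot is
\[
r_t=U_{t,w}=e^{-p_{t,w}Y^\star}.
\]
Using the independence from Step 2,
\[
\Pr(r_t\le x\mid X_t=w)=\Pr\bigl(Y^\star\ge -\log x/p_{t,w}\bigr)=x^{1/p_{t,w}}.
\]
This expression depends on $w$ only through $p_{t,w}$. Conditioning on $p_t^\star$ averages over all tokens sharing that probability value, which leaves $x^{d_t}$ unchanged. If $\mathbf p_t$ is itself random, we condition on it first and then apply the same argument.

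\textbf{Main obstacle.} The only delicate point is the conditioning. We must make sure that the previously unseen context window guarantees fresh uniforms that are independent of $\mathbf p_t$ (which depends only on the past tokens). We must also handle ties and zero-probability tokens, which are measure-zero events or excluded by the restriction $p_{t,w}>0$ in~\eqref{eq:gumbel-max}.
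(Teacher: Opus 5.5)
Your proof is correct. The paper gives no proof of its own and simply restates the result from \cite{li2025}, and your competing-exponentials computation is the standard derivation: the substitution $E_w=-\log U_{t,w}$ turns the direct integral $\Pr(X_t=w,\,U_{t,w}\le x)=\int_0^x\prod_{v\neq w}u^{p_{t,v}/p_{t,w}}\,du=p_{t,w}\,x^{1/p_{t,w}}$ into your identity $\Pr(X_t=w,\,Y^\star>y)=p_{t,w}e^{-y}$. The freshness point you flag as the main obstacle is treated in the paper as an immediate consequence of Assumption~\ref{ass:pseudorandom}, the same fact behind its remark that \eqref{eq:gumbel-max} is then an exact sample from $\mathbf p_t$, so no further argument is expected there.
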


\begin{corollary}
\label{cor:signal-law}
Under the conditions of Lemma~\ref{lem:pivot-law}, the CDF and density of a watermarked pivot, averaged over the distribution of $p_t^\star$, are
\begin{equation}
\label{eq:signal-law}
F_H(x)=\mathbb{E}\bigl[x^{d_t}\bigr],\qquad f_H(x)=\mathbb{E}\bigl[d_t\,x^{d_t-1}\bigr].
\end{equation}
Moreover, $f_H$ is nondecreasing and continuous, $f_H(1)=\mathbb E[d_t]\le\lvert\mathcal V\rvert$, and $f_H$ is strictly increasing whenever $\Pr(p_t^\star<1)>0$.
\end{corollary}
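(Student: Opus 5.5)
The plan is to start from Lemma~\ref{lem:pivot-law} and take expectations. By the tower property, $F_H(x)=\Pr(r_t\le x)=\mathbb E[\Pr(r_t\le x\mid p_t^\star)]=\mathbb E[x^{d_t}]$ for every $x\in[0,1]$. To pass to the density, I will differentiate under the expectation. For fixed $d\ge1$ the map $x\mapsto x^d$ is differentiable on $(0,1)$ with derivative $d\,x^{d-1}$. This derivative is bounded by $d\le\lvert\mathcal V\rvert$, since $p_t^\star\ge1/\lvert\mathcal V\rvert$ whenever the selected token has positive probability; more carefully, $\mathbb E[d_t]=\sum_w p_{t,w}\cdot(1/p_{t,w})$ summed over the support, which is at most $\lvert\mathcal V\rvert$, and this also holds after averaging over contexts. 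Dominated convergence (or Fubini, writing $x^{d}=\int_0^x d\,s^{d-1}\,ds$) then gives $F_H(x)=\int_0^x\mathbb E[d_t s^{d_t-1}]\,ds$, so $f_H(x)=\mathbb E[d_tx^{d_t-1}]$. The Fubini route is the cleaner choice because the integrand is nonnegative and no domination argument is needed.

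\emph{Properties.}
\begin{itemize}
\item \textbf{Monotonicity.} Since $d_t\ge1$, each $x\mapsto d_t x^{d_t-1}$ is nondecreasing on $[0,1]$, and an expectation of nondecreasing functions is nondecreasing.
\item \textbf{Continuity.} On $[0,1]$ the integrand is continuous in $x$ and bounded by the integrable $d_t$. Here I interpret $0^0=1$ for $d_t=1$. Dominated convergence then gives continuity, including at $x=1$.
\item \textbf{The value at one.} We have $f_H(1)=\mathbb E[d_t]$. The key identity behind the bound is $\mathbb E[d_t\mid \mathbf p_t]=\sum_{w:p_{t,w}>0}p_{t,w}/p_{t,w}=\lvert\operatorname{supp}\mathbf p_t\rvert\le\lvert\mathcal V\rvert$, which uses that $X_t\sim\mathbf p_t$ exactly by Assumption~\ref{ass:pseudorandom}. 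Averaging over $\mathbf p_t$ preserves the bound.
\item \textbf{Strict monotonicity.} If $\Pr(d_t>1)>0$, take $x<y$. Pointwise $d_tx^{d_t-1}\le d_ty^{d_t-1}$, with strict inequality on the event $\{d_t>1\}$, which has positive probability. Hence the expectations are strictly ordered.
\end{itemize}

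The main obstacle is the measure-theoretic bookkeeping: interchanging derivative and expectation, and handling the behavior near $x=0$, where $x^{d-1}$ is fine because $d-1\ge0$. The Fubini representation sidesteps the interchange entirely, so I expect no real difficulty beyond stating it carefully.
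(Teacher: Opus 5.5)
Your proposal is correct and follows the paper's brief argument: the tower property applied to Lemma~\ref{lem:pivot-law}, the support-size identity $\mathbb E[d_t\mid\mathbf p_t]=\lvert\operatorname{supp}\mathbf p_t\rvert$ for the value $f_H(1)$, and dominated convergence with the integrable dominator $d_t$ for continuity; your Tonelli step and your strict-monotonicity argument fill in details the paper leaves implicit. Drop the aside that $p_t^\star\ge1/\lvert\mathcal V\rvert$, which is false because a sampled token can have probability far below $1/\lvert\mathcal V\rvert$: $d_t$ is bounded by $\lvert\mathcal V\rvert$ only in conditional mean, and that mean bound is all your argument actually uses.
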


The bound on $f_H(1)$ holds because $\mathbb E[d_t\mid\mathbf p_t]$ is the support size of $\mathbf p_t$, and continuity follows by dominated convergence. Since $F_H$ depends on the generator only through the distribution of $p_t^\star$, the verifier can learn it offline (Section~\ref{sec:blade}). The model $F_H(x)=x^d$ of \cite{zhou2027icassp} is the special case $p_t^\star\equiv1/d$, which Section~\ref{sec:form-maxpivot} uses to quantify the competition cost.

By \eqref{eq:pivot-law}, a token with $p_t^\star$ close to $1$ has a nearly uniform pivot and is hard to detect, whereas a token with a low $p_t^\star$ has a pivot concentrated near $1$. The single-token power $\mathbb E[1-(1-\alpha)^{d_t}]$ thus grows as the text becomes less predictable. Fig.~\ref{fig:single-mode} shows empirical pivot CDFs on OPT-1.3B text for unwatermarked tokens and for watermarked tokens generated at three temperatures. At temperature $1.0$ and $\alpha=0.05$, rule \eqref{eq:single-rule} detects only $42.0\%$ of the watermarked tokens. Higher temperatures typically lower $p_t^\star$, which explains the order of the three watermarked curves.

\begin{figure}[!t]
\centering
\includegraphics{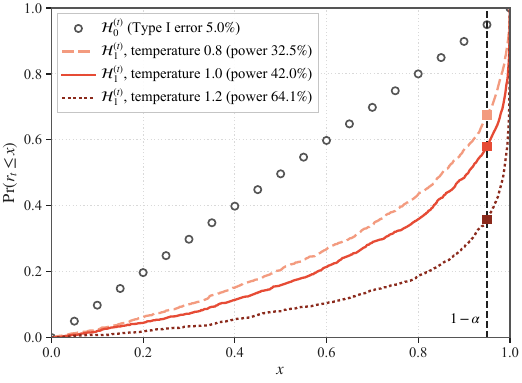}
\caption{Empirical pivot CDFs on OPT-1.3B text.}
\label{fig:single-mode}
\end{figure}

\section{Problem Formulation}
\label{sec:formulation}

\subsection{Token-Level Hypothesis Test}
\label{sec:form-multimode}

In multi-mode watermarking, the provider holds $K$ watermark modes, e.g., one per user, and mode $j$ generates text by \eqref{eq:gumbel-max} with its own key $\kappa_j$. At most one mode acts at each position. A document may concatenate spans generated under different modes with unwatermarked spans. The active mode is latent but locally persistent, i.e., adjacent tokens usually share it. The watermark also need not be retained at every token of a watermarked span, e.g., after later editing. Because modes can change within a document and the watermark may be lost at some tokens, we formulate attribution at the token level. 

The verifier does not know which mode acted, so it computes a pivot for every candidate mode
\begin{equation}
\label{eq:mode-pivots}
r_{t,j}=\mathcal{A}_{\kappa_j}(X_t\mid C_t)\in[0,1],\qquad j=1,\dots,K,
\end{equation}
and forms the vector $\mathbf{r}_t=(r_{t,1},\dots,r_{t,K})$. At position $t$, the attribution problem is the multiple hypothesis test \cite{zhou2027icassp}
\begin{equation}
\label{eq:multi-test}
\begin{aligned}
\mathcal{H}_0^{(t)}&:\ r_{t,i}\sim F_0\ \text{i.i.d.},\quad i=1,\dots,K,\\
\mathcal{H}_j^{(t)}&:\ r_{t,j}\sim F_{H,j},\quad r_{t,i}\sim F_0\ \ (i\neq j),
\end{aligned}
\end{equation}
where $\mathcal{H}_0^{(t)}$ states that no watermark is retained at position $t$, because either no mode acted or the coupling between the token and its pseudorandom values was lost. For $j=1,\dots,K$, $\mathcal{H}_j^{(t)}$ states that mode $j$ acted and its watermark is retained. Thus a token whose watermark is not retained is a null even inside a watermarked span, and its pivots are uniform. We denote by $Z_t\in\{0,1,\dots,K\}$ the true state, so that $Z_t=j$ under $\mathcal{H}_j^{(t)}$. 

By Assumption~\ref{ass:pseudorandom}, the $K$ pivots in \eqref{eq:multi-test} are independent at a position whose context window has not occurred earlier. Repeated context windows \cite{cai2025collision} are identified from the observed text, and the verifier masks such positions and reports them as unassessed. Throughout, $t$ denotes an unmasked position. We write $f_j$ for the density of $F_{H,j}$. It is nondecreasing as in Section~\ref{sec:prelim-signal} and may differ across modes.

Unlike the single-mode test \eqref{eq:single-test}, the $K$ modes compete with each other in the multi-mode test \eqref{eq:multi-test}. A decision rule $\delta_t\in\{0,1,\dots,K\}$ maps the pivots $\mathbf r_t$ to the null decision ($\delta_t=0$) or a mode label. Its correct-label power for mode $j$, $P_j(\delta_t)=\Pr(\delta_t=j\mid\mathcal{H}_j^{(t)})$, counts a successful detection only if it reports the true mode, and its Type I error is $P_{\mathrm I}(\delta_t)=\Pr(\delta_t\neq0\mid\mathcal{H}_0^{(t)})$. Since the pivots are independent, these quantities are
\begin{equation}
\label{eq:power-typeI}
\begin{aligned}
P_j(\delta_t)&=\int_{[0,1]^K}\mathbb{I}\{\delta_t(\mathbf{r})=j\}\,f_j(r_j)\,d\mathbf{r},\\
P_{\mathrm I}(\delta_t)&=\int_{[0,1]^K}\mathbb{I}\{\delta_t(\mathbf{r})\neq0\}\,d\mathbf{r}.
\end{aligned}
\end{equation}

\subsection{Competition Cost of Max-Pivot}
\label{sec:form-maxpivot}

A direct approach applies the single-mode rule \eqref{eq:single-rule} to the largest of the $K$ pivots and reports its mode, which we call Max-Pivot:
\begin{equation}
\label{eq:max-pivot}
\begin{cases}
\mathcal{H}_0^{(t)}, & \text{if }\max_ir_{t,i}\le1-\alpha,\\[2pt]
\mathcal{H}_j^{(t)}, & \text{if }j=\arg\max_ir_{t,i}\text{ and }r_{t,j}>1-\alpha.
\end{cases}
\end{equation}
Equivalently, it runs $K$ separate single-mode tests, declares a watermark retained if any of them rejects the null, and reports the mode with the largest pivot. This rule is simple, but it incurs a competition cost: its Type I error grows with $K$, and its correct-label power falls. The following two facts quantify this cost under a simple alternative distribution.
\begin{fact}
\label{fact:max-pivot}
Under Assumption~\ref{ass:pseudorandom}, let $F_{H,j}(x)=x^d$ for all $j$ and some $d\ge1$, i.e., $p_t^\star\equiv1/d$. When $K$ is large, the Type I error of Max-Pivot approaches $1$, and its correct-label power is $P_j(\delta_t)\approx d/K$.
\end{fact}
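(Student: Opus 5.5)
We compute both quantities in closed form from \eqref{eq:power-typeI} and then take $K\to\infty$. Under $\mathcal H_0^{(t)}$, the $K$ pivots are i.i.d.\ $\operatorname{Unif}(0,1)$ by Assumption~\ref{ass:pseudorandom} at an unmasked position. Max-Pivot \eqref{eq:max-pivot} rejects exactly when $\max_i r_{t,i}>1-\alpha$. Hence
\begin{equation*}
P_{\mathrm I}(\delta_t)=1-(1-\alpha)^K,
\end{equation*}
which tends to $1$ as $K\to\infty$ for any fixed $\alpha\in(0,1)$. This gives the first claim directly.

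For the correct-label power under $\mathcal H_j^{(t)}$, we have $r_{t,j}$ with CDF $x^d$ (density $d x^{d-1}$), and the other $K-1$ pivots are independent uniforms. The rule reports $j$ iff $r_{t,j}>1-\alpha$ and $r_{t,j}>\max_{i\neq j} r_{t,i}$; ties have probability zero. Conditioning on $r_{t,j}=x$, the second event has probability $x^{K-1}$. Therefore
\begin{equation*}
P_j(\delta_t)=\int_{1-\alpha}^1 d\,x^{d-1}x^{K-1}\,dx=\frac{d}{K+d-1}\Bigl(1-(1-\alpha)^{K+d-1}\Bigr).
\end{equation*}
Letting $K\to\infty$ with $d$ and $\alpha$ fixed, the bracket tends to $1$ exponentially fast and $d/(K+d-1)\sim d/K$. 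This yields $P_j(\delta_t)\approx d/K$. Equivalently, $K\,P_j(\delta_t)\to d$, which is the precise meaning of ``$\approx$''.

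There is no serious obstacle. The one point needing care is stating the asymptotic regime correctly: the approximation requires $K\gg d$, so that $d-1$ is negligible relative to $K$, and also $(1-\alpha)^{K}\to0$. I would state both conditions explicitly. The closed form also shows the interpretation: as $K$ grows the threshold becomes irrelevant, and the true mode wins only if its pivot beats $K-1$ uniforms, which happens with probability of order $d/K$.
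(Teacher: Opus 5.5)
Your proposal is correct and follows the paper's proof in Appendix~\ref{app:C}: conditioning on $r_{t,j}$ gives $P_j(\delta_t)=\frac{d}{K+d-1}\bigl[1-(1-\alpha)^{K+d-1}\bigr]$ and the Type I error $1-(1-\alpha)^K$. The same asymptotics then yield $P_j(\delta_t)=d/K+o(1/K)$.
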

\begin{IEEEproof}
See Appendix~\ref{app:C}.
\end{IEEEproof}
\begin{fact}
\label{fact:cal-max-pivot}
Under the conditions of Fact~\ref{fact:max-pivot}, calibrated Max-Pivot, which replaces the threshold $1-\alpha$ by $(1-\alpha)^{1/K}$, has Type I error $\alpha$. When $K$ is large, its correct-label power is $P_j(\delta_t)\approx\alpha d/K$.
\end{fact}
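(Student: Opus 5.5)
Under $\mathcal H_0^{(t)}$ the $K$ pivots are i.i.d.\ $\operatorname{Unif}(0,1)$ by Assumption~\ref{ass:pseudorandom}, so I would start from the exact law of the maximum, $\Pr(\max_i r_{t,i}\le x)=x^K$. The calibrated rule rejects exactly when $\max_i r_{t,i}>(1-\alpha)^{1/K}$. Its Type I error is therefore $1-\bigl((1-\alpha)^{1/K}\bigr)^K=\alpha$, with no approximation needed.

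For the correct-label power under $\mathcal H_j^{(t)}$, set $\tau=(1-\alpha)^{1/K}$. The true-mode pivot $r_{t,j}$ has density $d x^{d-1}$, and the other $K-1$ pivots are independent uniforms. A correct detection happens exactly when $r_{t,j}>\tau$ and $r_{t,j}$ exceeds all the other pivots; ties have probability zero. Conditioning on $r_{t,j}=x$, the second event has probability $x^{K-1}$. This gives the exact expression
\begin{equation*}
P_j(\delta_t)=\int_\tau^1 d x^{d-1}x^{K-1}\,dx=\frac{d}{K+d-1}\Bigl(1-\tau^{K+d-1}\Bigr)=\frac{d}{K+d-1}\Bigl(1-(1-\alpha)^{(K+d-1)/K}\Bigr).
\end{equation*}
As a check, this is the same computation behind Fact~\ref{fact:max-pivot}, with threshold $1-\alpha$ replaced by $\tau$.

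The asymptotic step is the only one that needs any care. For fixed $d$ and $\alpha$, as $K\to\infty$ the exponent $(K+d-1)/K$ tends to $1$. So $1-(1-\alpha)^{(K+d-1)/K}\to\alpha$, and more precisely it equals $\alpha-(1-\alpha)\log(1-\alpha)\,(d-1)/K+O(K^{-2})$. Also $d/(K+d-1)=(d/K)(1+O(1/K))$. Multiplying the two factors gives $P_j(\delta_t)=(\alpha d/K)(1+O(1/K))$, that is, $P_j(\delta_t)\approx\alpha d/K$. The argument takes $d$ fixed while $K$ grows; I would state this explicitly, since the approximation needs $d\ll K$. Comparing with the $d/K$ of Fact~\ref{fact:max-pivot} then shows that calibration costs a factor of $\alpha$.
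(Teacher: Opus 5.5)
Your proposal is correct and follows the paper's proof in Appendix~\ref{app:C} step for step. Both use the null law $\tau^K$ of the maximum, condition on $r_{t,j}$ to obtain the exact power $\frac{d}{K+d-1}\bigl(1-\tau^{K+d-1}\bigr)$, and expand the bracket as $\alpha+O(1/K)$. Your explicit first-order term and your remark that $d$ is held fixed as $K\to\infty$ only make precise what the paper leaves implicit in its $o(1/K)$ statement.
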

\begin{IEEEproof}
See Appendix~\ref{app:C}.
\end{IEEEproof}

Because the true pivot must exceed the other $K-1$ pivots, the correct-label power decays as $1/K$ with or without calibration, and restoring the level $\alpha$ costs a further factor $\alpha$. For example, at $\alpha=0.05$, the Type I error of Max-Pivot grows from $9.8\%$ at $K=2$ to $80.6\%$ at $K=32$. With $d=4$, calibrated Max-Pivot retains $26\%$ of the power of a single-mode test that knows the generating mode at $K=4$ and only $3.4\%$ at $K=32$, in line with $\alpha d/K$. The loss stems from missing mode information rather than from the pivot itself, since the single-mode test with the known mode does not incur it.

\subsection{Prior-Weighted Attribution Problem}
\label{sec:form-problem}

Max-Pivot treats all modes alike, although the modes need not be equally likely at position $t$. We assume $\Pr(Z_t\neq0)>0$ and define the mode prior
\begin{equation}
\label{eq:mode-prior}
q_{t,j}=\Pr\bigl(Z_t=j\mid Z_t\neq0\bigr),\qquad j=1,\dots,K,
\end{equation}
as the probability that a watermark retained at position $t$ comes from mode $j$. A rule that uses $\mathbf q_t=(q_{t,1},\dots,q_{t,K})$ can direct its Type I error budget toward the likely modes. We therefore seek the rule that maximizes the prior-weighted correct-label power under a token-level Type I error constraint:
\begin{equation}
\label{eq:problem}
\begin{aligned}
\max_{\delta_t}\quad&\sum_{j=1}^Kq_{t,j}\,P_j(\delta_t)\\
\text{s.t.}\quad&P_{\mathrm I}(\delta_t)\le\alpha.
\end{aligned}
\end{equation}
By \eqref{eq:mode-prior}, the objective equals $\Pr(\delta_t=Z_t\mid Z_t\neq0)$, and the constraint bounds the Type I error at position $t$.

\section{Optimal Attribution Rule}
\label{sec:optimal}

\subsection{Structure of the Optimal Rule}
\label{sec:opt-rule}

Problem \eqref{eq:problem} weighs the correct-label powers of the $K$ modes against a single Type I error budget. Let $g_t(\mathbf r_t)=\max_jq_{t,j}f_j(r_{t,j})$ denote the largest prior-weighted alternative density at position $t$. The following theorem shows that the optimal rule thresholds this score.

\begin{theorem}
\label{thm:optimal}
Under Assumption~\ref{ass:pseudorandom}, let $0<\alpha<1$ and suppose every $f_j$ is continuous and strictly increasing. An optimal decision rule for problem \eqref{eq:problem} is
\begin{equation}
\label{eq:opt-rule}
\delta_t^\star=\begin{cases}
0, & g_t(\mathbf r_t)\le c_t,\\[2pt]
\arg\max_jq_{t,j}f_j(r_{t,j}), & \text{otherwise},
\end{cases}
\end{equation}
where $c_t>0$ is the unique constant such that $P_{\mathrm I}(\delta_t^\star)=\alpha$.
\end{theorem}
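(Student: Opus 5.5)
This is a Neyman--Pearson argument with a Lagrangian, extended to $K+1$ decisions. Fix any rule $\delta$ with $P_{\mathrm I}(\delta)\le\alpha$ and any constant $c\ge0$. By \eqref{eq:power-typeI},
\begin{equation*}
\sum_j q_{t,j}P_j(\delta)-c\,P_{\mathrm I}(\delta)=\int_{[0,1]^K}\sum_{j=1}^K\mathbb I\{\delta(\mathbf r)=j\}\bigl(q_{t,j}f_j(r_j)-c\bigr)\,d\mathbf r .
\end{equation*}
Pointwise in $\mathbf r$, the integrand is maximized by two choices. If $\max_j q_{t,j}f_j(r_j)\le c$, choose $\delta=0$, which contributes $0$. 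Otherwise, choose the maximizing label, which contributes $g_t(\mathbf r)-c>0$. This is exactly $\delta_t^\star$ with threshold $c$. Hence $\sum_jq_{t,j}P_j(\delta)-cP_{\mathrm I}(\delta)\le\sum_jq_{t,j}P_j(\delta_t^\star)-cP_{\mathrm I}(\delta_t^\star)$. If $P_{\mathrm I}(\delta_t^\star)=\alpha\ge P_{\mathrm I}(\delta)$, rearranging gives $\sum_jq_{t,j}P_j(\delta)\le\sum_jq_{t,j}P_j(\delta_t^\star)+c(P_{\mathrm I}(\delta)-\alpha)\le\sum_jq_{t,j}P_j(\delta_t^\star)$, which is optimality. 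Ties in the $\arg\max$ occur on a Lebesgue-null set, because the $f_j$ are strictly increasing and the coordinates are independent uniforms. So any tie-breaking rule works and measurability is not an issue.

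The remaining and main step is to show that a unique $c_t>0$ with $P_{\mathrm I}(\delta_t^\star)=\alpha$ exists. Under $\mathcal H_0^{(t)}$ the pivots are i.i.d.\ uniform, so we can compute
\begin{equation*}
\phi(c):=\Pr_0\bigl(g_t(\mathbf r)>c\bigr)=1-\prod_{j=1}^K\Pr\bigl(q_{t,j}f_j(U)\le c\bigr)=1-\prod_{j=1}^K G_j(c),
\end{equation*}
where $U\sim\operatorname{Unif}(0,1)$ and $G_j(c)=\Pr(q_{t,j}f_j(U)\le c)$. Modes with $q_{t,j}=0$ contribute $G_j\equiv1$ for $c\ge0$, so they drop out. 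For $q_{t,j}>0$, the function $f_j$ is continuous and strictly increasing on $[0,1]$, so $q_{t,j}f_j(U)$ has a continuous distribution. It is supported on $[q_{t,j}f_j(0),q_{t,j}f_j(1)]$ and $G_j$ is strictly increasing there; explicitly, $G_j(c)=f_j^{-1}(c/q_{t,j})$ clipped to $[0,1]$. Therefore $\phi$ is continuous and nonincreasing on $[0,\infty)$. It equals $0$ for $c\ge\max_j q_{t,j}f_j(1)$. At $c=0$ we have $\phi(0)=1>\alpha$: since $f_j$ is strictly increasing and nonnegative, $f_j(U)>0$ almost surely, so $G_j(0)=0$ for any $j$ with $q_{t,j}>0$.

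By the intermediate value theorem, some $c>0$ gives $\phi(c)=\alpha$. For uniqueness, note that on the set $\{c:0<\phi(c)<1\}$ every factor $G_j(c)$ is positive and at least one is below $1$. Such a factor is strictly increasing at $c$, and the other positive factors are nondecreasing, so the product is strictly increasing and $\phi$ is strictly decreasing there. Because $0<\alpha<1$, the level-$\alpha$ crossing is therefore unique, which gives $c_t>0$ and completes the proof.
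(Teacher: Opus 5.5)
Your proof is correct and takes essentially the paper's route: the pointwise maximization of $\sum_j\mathbb I\{\delta(\mathbf r)=j\}(q_{t,j}f_j(r_j)-c)$ combines, in one step, the paper's labeling bound $\sum_jq_{t,j}P_j(\delta)\le\int_\Omega g_t$ and its bathtub inequality, which is exactly a Lagrangian argument with multiplier $c_t$. Your existence and uniqueness argument for $c_t$ also matches Appendix~\ref{app:A}: continuity of the product null CDF, $\phi(0)=1$, and strict monotonicity wherever $0<\phi<1$. That appendix additionally records the union-of-tails form behind \eqref{eq:balance}, which the theorem itself does not need.
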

\begin{IEEEproof}
See Appendix~\ref{app:A}.
\end{IEEEproof}

By Corollary~\ref{cor:signal-law}, the conditions of Theorem~\ref{thm:optimal} hold for the Gumbel-max watermark: each $f_j$ is continuous, and it is strictly increasing whenever $p_t^\star<1$ with positive probability under mode $j$.

Theorem~\ref{thm:optimal} keeps the simple threshold form of Max-Pivot and draws its gain from the score alone. Max-Pivot \eqref{eq:max-pivot} thresholds the largest pivot $\max_jr_{t,j}$, whereas rule \eqref{eq:opt-rule} thresholds the largest prior-weighted alternative density $g_t(\mathbf r_t)$ and reports the mode that attains it. This structure agrees with the optimal joint detection and estimation test under a Type I error constraint \cite{moustakides2011}, in which the mode label plays the role of the estimated parameter \cite{middleton1968}. The watermark model adds the uniform product null in \eqref{eq:multi-test}. This null gives the null CDF of the score in the product form \eqref{eq:null-cdf}, which yields the budget split and the misattribution bound of Sections~\ref{sec:opt-calibration} and~\ref{sec:opt-misattribution}. The following corollary shows that rule \eqref{eq:opt-rule} recovers two familiar rules as special cases of the prior.

\begin{corollary}
\label{cor:benchmarks}
Under the conditions of Theorem~\ref{thm:optimal}:
\begin{enumerate}
\item[(i)] if $\mathbf q_t=\mathbf 1_K/K$ and $f_j=f$ for all $j$, rule \eqref{eq:opt-rule} is calibrated Max-Pivot of Fact~\ref{fact:cal-max-pivot};
\item[(ii)] if $q_{t,j}=1$ for the known generating mode $j$, rule \eqref{eq:opt-rule} reduces to applying the single-mode test \eqref{eq:single-rule} to $r_{t,j}$, which we call the Oracle.
\end{enumerate}
\end{corollary}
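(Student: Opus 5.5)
The statement to prove is Corollary~\ref{cor:benchmarks}. I would specialize rule \eqref{eq:opt-rule} of Theorem~\ref{thm:optimal} in each case and show that the score $g_t$ and the argmax reduce to the stated rules. The only facts needed are that $f$ (and each $f_j$) is strictly increasing and continuous, and that $c_t$ is pinned down uniquely by $P_{\mathrm I}(\delta_t^\star)=\alpha$.

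For (i), take $q_{t,j}=1/K$ and $f_j=f$. Then $g_t(\mathbf r_t)=\tfrac1K\max_j f(r_{t,j})=\tfrac1K f(\max_j r_{t,j})$ because $f$ is increasing. Strict monotonicity gives $\arg\max_j q_{t,j}f(r_{t,j})=\arg\max_j r_{t,j}$, with ties of probability zero under continuous pivots. The event $\{g_t\le c_t\}$ therefore equals $\{\max_j r_{t,j}\le \tau\}$, where $\tau$ is determined by $f(\tau)=Kc_t$. Under the uniform product null, $\Pr(\max_j r_{t,j}\le\tau)=\tau^K$, so the constraint $P_{\mathrm I}=\alpha$ forces $\tau=(1-\alpha)^{1/K}$, which is the calibrated Max-Pivot threshold of Fact~\ref{fact:cal-max-pivot}. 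Two edge cases need a check: $Kc_t$ may lie outside the range $[f(0),f(1)]$, and $f$ may jump. Continuity and strict monotonicity make $\tau\mapsto\tau^K$ a bijection onto $[0,1]$ and $f$ a bijection onto its range. Since $0<\alpha<1$, the value $\tau$ lies in $(0,1)$, so $c_t=f(\tau)/K$ is attained and, by the uniqueness in Theorem~\ref{thm:optimal}, is the right constant.

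For (ii), take $q_{t,j}=1$ and $q_{t,i}=0$ for $i\neq j$. Then $g_t=f_j(r_{t,j})$, every other weighted density is zero, and the argmax is $j$ whenever $g_t>c_t>0$. The rule rejects exactly when $f_j(r_{t,j})>c_t$, that is, when $r_{t,j}>\tau$ with $\tau=f_j^{-1}(c_t)$. The null probability of this event is $1-\tau$, so $\tau=1-\alpha$, which is \eqref{eq:single-rule}. The main subtlety is that the Theorem assumes $c_t>0$ and a degenerate prior. I would note that $f_j(1-\alpha)>f_j(0)\ge0$ by strict monotonicity, so $c_t=f_j(1-\alpha)>0$ is valid.

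I expect the main obstacle to be bookkeeping rather than substance: matching $c_t$ to the stated threshold requires inverting $f$ over its range and handling ties and null-probability boundaries. I would treat these by appealing to continuity and to the absolute continuity of the pivot distributions.
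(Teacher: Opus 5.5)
Your proposal is correct and follows the paper's route: substitute the two priors into \eqref{eq:opt-rule}, which is all the paper's omitted proof says. You also supply the threshold matching the paper skips, by exhibiting $c=f\bigl((1-\alpha)^{1/K}\bigr)/K$ and $c=f_j(1-\alpha)$ and invoking the uniqueness of $c_t$; note only that the ``jump'' edge case is moot since $f$ is continuous, and that in (i) you should state $c>0$ explicitly via $f(\tau)>f(0)\ge0$, as you do in (ii).
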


\begin{IEEEproof}
The proof follows by substituting the two priors into \eqref{eq:opt-rule} and is omitted.
\end{IEEEproof}

The two cases are the extremes of mode information. Without information that distinguishes the modes, no mode can be favored, and the prior is uniform. A common alternative density then makes the modes interchangeable. With full information, the generating mode is known, and the prior is a point mass. Thus the power gap between calibrated Max-Pivot and the Oracle measures the potential gain from mode information.

\subsection{Calibration and Budget Allocation}
\label{sec:opt-calibration}

For a continuous, strictly increasing $f_j$, we extend its inverse $f_j^{-1}$ to $[0,\infty)$ by clipping, i.e., $f_j^{-1}(y)=0$ for $y\le f_j(0)$ and $f_j^{-1}(y)=1$ for $y\ge f_j(1)$.
Let $A_t$ denote the null CDF of $g_t(\mathbf r_t)$. By \eqref{eq:multi-test}, the independence of the uniform null pivots gives
\begin{equation}
\label{eq:null-cdf}
\begin{aligned}
A_t(c)&=\Pr\!\left(\max_jq_{t,j}f_j(r_{t,j})\le c\,\middle|\,\mathcal H_0^{(t)}\right)\\
&=\prod_{j:q_{t,j}>0}f_j^{-1}\!\left(\frac{c}{q_{t,j}}\right),\qquad c\ge0,
\end{aligned}
\end{equation}
where modes with $q_{t,j}=0$ have zero score and are omitted from the product.

At $c=c_t$, the complement of each factor is the per-mode level of the corresponding single-mode test:
\begin{equation}
\label{eq:private-level}
s_{t,j}=1-f_j^{-1}\bigl(c_t/q_{t,j}\bigr),\qquad q_{t,j}>0,
\end{equation}
with $s_{t,j}=0$ when $q_{t,j}=0$. By the monotonicity of $f_j$, rule \eqref{eq:opt-rule} rejects on the union of per-mode upper tails $\{r_{t,j}>1-s_{t,j}\}$. Its Type I error is therefore $1-A_t(c_t)$, and calibration gives
\begin{equation}
\label{eq:balance}
A_t(c_t)=\prod_{j=1}^K(1-s_{t,j})=1-\alpha.
\end{equation}
We refer to \eqref{eq:balance} as the calibration equation. It splits the Type I error budget among the modes: mode $j$ receives the per-mode level $s_{t,j}$, and the levels are linked through the common threshold $c_t$ by \eqref{eq:private-level}. For a common density, a mode with a lower prior receives a smaller level, so the optimal rule shifts the budget toward the more likely modes. In particular, the optimal rule withdraws the budget from unlikely modes: mode $j$ receives no budget if and only if $q_{t,j}f_j(1)\le c_t$. Fig.~\ref{fig:optimal-region} illustrates the resulting rejection region for two modes with a common density.

The decision does not require solving the calibration equation for $c_t$. Let $v_t=g_t(\mathbf r_t)$ denote the observed score.

\begin{lemma}
\label{lem:direct-eval}
If $A_t$ is strictly increasing wherever $0<A_t(c)<1$, then for $0<\alpha<1$, $v_t>c_t$ if and only if $A_t(v_t)>1-\alpha$.
\end{lemma}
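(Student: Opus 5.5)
We prove the two directions separately. Both rest on three facts. First, $A_t$ is a nondecreasing CDF. Second, by the calibration equation \eqref{eq:balance}, $A_t(c_t)=1-\alpha$. Third, this value lies strictly inside $(0,1)$ because $0<\alpha<1$. We also use that $A_t$ is continuous. This follows from \eqref{eq:null-cdf}: each factor $f_j^{-1}(c/q_{t,j})$ is the clipped inverse of a continuous, strictly increasing function, hence continuous in $c$, and a finite product of continuous functions is continuous.

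\emph{($\Rightarrow$)} Suppose $v_t>c_t$. Then $A_t(v_t)\ge A_t(c_t)=1-\alpha$ by monotonicity, and we need strictness. If $A_t(v_t)<1$, then every $c\in[c_t,v_t]$ satisfies $0<1-\alpha\le A_t(c)\le A_t(v_t)<1$. So $A_t$ is strictly increasing on this interval by hypothesis, and $A_t(v_t)>A_t(c_t)=1-\alpha$. If instead $A_t(v_t)=1$, then $A_t(v_t)=1>1-\alpha$ directly.

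\emph{($\Leftarrow$)} Argue by contraposition. If $v_t\le c_t$, monotonicity gives $A_t(v_t)\le A_t(c_t)=1-\alpha$, so $A_t(v_t)>1-\alpha$ fails. This direction needs no strictness.

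The only delicate point is the strictness step in ($\Rightarrow$). The hypothesis gives strict increase only on the region where $0<A_t<1$, so the argument must keep the segment $[c_t,v_t]$ inside that region or else land at $A_t(v_t)=1$. The case split above handles this, since the lower end of the segment already has $A_t\ge 1-\alpha>0$.

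I would also note that the existence of $c_t$ with $A_t(c_t)=1-\alpha$ is guaranteed by Theorem~\ref{thm:optimal}. It also follows from the intermediate value theorem, since $A_t$ is continuous with $A_t(0)=0$ (because $f_j(0)\ge0$ and clipping makes each factor $0$ at $c=0$ when $f_j(0)=0$; otherwise $A_t$ is small near $0$) and $A_t(c)=1$ for $c\ge\max_j q_{t,j}f_j(1)$. Uniqueness on the region where $0<A_t<1$ follows from the strict-increase hypothesis. That region is exactly where $c_t$ lies, so $c_t$ is well defined and the equivalence holds.
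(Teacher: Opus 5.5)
Your proof is correct and follows the paper's approach. The paper's one-line proof also combines $A_t(c_t)=1-\alpha\in(0,1)$ from the calibration equation with the monotonicity of $A_t$ and its strict increase where $0<A_t<1$; your case split on whether $A_t(v_t)<1$ just spells out the strictness step, and the continuity and existence remarks are not needed (clipping gives $A_t(0)=0$ regardless of $f_j(0)$, since $0\le f_j(0)$ always).
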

\begin{IEEEproof}
By \eqref{eq:balance}, $A_t(c_t)=1-\alpha\in(0,1)$, so the claim follows from the monotonicity of $A_t$ and its strict increase where $0<A_t<1$.
\end{IEEEproof}

By Lemma~\ref{lem:direct-eval}, rule \eqref{eq:opt-rule} rejects the null hypothesis if and only if $A_t(v_t)>1-\alpha$, since the $f_j$ in Theorem~\ref{thm:optimal} make $A_t$ strictly increasing where $0<A_t<1$ (Appendix~\ref{app:A}). The verifier therefore evaluates the null CDF at the observed score and compares it with $1-\alpha$, without computing $c_t$.

\begin{figure}[!t]
\centering
\includegraphics{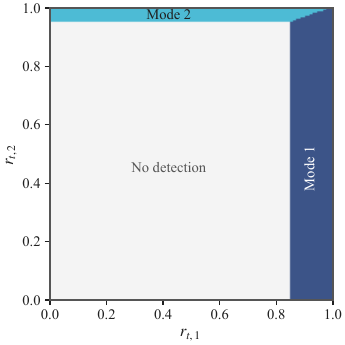}
\caption{Optimal rejection region for $K=2$, $\mathbf{q}_t=(0.7,0.3)$ and $\alpha=0.2$, with the density estimate $\widehat f$ in \eqref{eq:density-est} for both modes.}
\label{fig:optimal-region}
\end{figure}

\subsection{Control of Misattribution}
\label{sec:opt-misattribution}

The per-mode levels also control the misattribution rate $\Pr(\delta_t\notin\{0,j\}\mid\mathcal H_j^{(t)})$, i.e., the probability of reporting a wrong mode for a retained watermark.

\begin{lemma}
\label{lem:misattribution}
Under the conditions of Theorem~\ref{thm:optimal}, rule \eqref{eq:opt-rule} satisfies, for $j=1,\dots,K$,
\begin{equation}
\label{eq:misattribution}
\Pr\bigl(\delta_t^\star\notin\{0,j\}\mid\mathcal H_j^{(t)}\bigr)\le\sum_{i\neq j}s_{t,i}\le-\log(1-\alpha).
\end{equation}
\end{lemma}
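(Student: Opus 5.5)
Under $\mathcal H_j^{(t)}$, rule \eqref{eq:opt-rule} can report a wrong label $i\neq j$ only when it rejects and the maximizing index is $i$. On that event $q_{t,i}f_i(r_{t,i})=g_t(\mathbf r_t)>c_t$. We therefore plan to show the event inclusion
\begin{equation*}
\bigl\{\delta_t^\star\notin\{0,j\}\bigr\}\subseteq\bigcup_{i\neq j}\bigl\{q_{t,i}f_i(r_{t,i})>c_t\bigr\}=\bigcup_{i\neq j}\bigl\{r_{t,i}>1-s_{t,i}\bigr\}.
\end{equation*}
Modes with $q_{t,i}=0$ have zero score, so they can never exceed $c_t>0$ and never enter the union. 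For modes with $q_{t,i}>0$, the equality of events follows from the strict monotonicity of $f_i$ and the clipped inverse in \eqref{eq:private-level}. We will check the boundary cases of the clipping. If $c_t/q_{t,i}\ge f_i(1)$, the event is empty and $s_{t,i}=0$. If $c_t/q_{t,i}\le f_i(0)$, the event has probability one up to a null set and $s_{t,i}=1$. Both cases agree with the definition.

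Next, under $\mathcal H_j^{(t)}$ the pivots $r_{t,i}$ with $i\neq j$ are $\operatorname{Unif}(0,1)$ by \eqref{eq:multi-test}. Hence $\Pr(r_{t,i}>1-s_{t,i}\mid\mathcal H_j^{(t)})=s_{t,i}$, and the union bound gives the first inequality $\sum_{i\neq j}s_{t,i}$. We do not need independence here, and the density $f_j$ of the true mode plays no role.

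For the second inequality, we will use the calibration equation \eqref{eq:balance}, $\prod_{i=1}^K(1-s_{t,i})=1-\alpha$. Each factor lies in $[0,1]$ and the product equals $1-\alpha>0$, so every $s_{t,i}<1$. Taking logarithms and applying $-\log(1-s)\ge s$ gives
\begin{equation*}
\sum_{i\neq j}s_{t,i}\le\sum_{i=1}^K s_{t,i}\le-\sum_{i=1}^K\log(1-s_{t,i})=-\log(1-\alpha).
\end{equation*}

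The main point requiring care is the first step: the exact identification of the rejection-with-label-$i$ event with the per-mode tail $\{r_{t,i}>1-s_{t,i}\}$ under the clipping convention. Ties in the $\arg\max$ happen with probability zero because each $f_i$ is strictly increasing and continuous, and the pivots have a density. The remaining steps are routine.
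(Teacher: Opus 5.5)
Your proposal is correct and follows the paper's proof step for step: the wrong-label event is contained in $\bigcup_{i\neq j}\{r_{t,i}>1-s_{t,i}\}$, the union bound is applied with uniform marginals, and $s\le-\log(1-s)$ is combined with the calibration equation \eqref{eq:balance}. Your checks of the clipping boundary cases and your observation that \eqref{eq:balance} forces every $s_{t,i}<1$ are sound details the paper leaves implicit. The remark about ties is harmless but unnecessary, since the inclusion holds however ties are broken.
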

\begin{IEEEproof}
Under $\mathcal H_j^{(t)}$, each pivot $r_{t,i}$ with $i\neq j$ is uniform by \eqref{eq:multi-test}, and rule \eqref{eq:opt-rule} reports mode $i$ only if $q_{t,i}f_i(r_{t,i})>c_t$, i.e., $r_{t,i}>1-s_{t,i}$. The union bound gives the first inequality, and $s\le-\log(1-s)$ with \eqref{eq:balance} gives the second.
\end{IEEEproof}

The Type I error budget thus limits wrong-mode reports without a separate constraint in \eqref{eq:problem}. At $\alpha=0.05$, for example, the misattribution rate is at most $5.13\%$.

\section{From the Optimal Rule to Practical Detectors}
\label{sec:detectors}

The optimal rule \eqref{eq:opt-rule} requires the prior $\mathbf q_t$ and the alternative densities $f_j$, which are unknown to the verifier. We first show that replacing them by estimates keeps the exact Type I error, since the neighboring pivots inform the prior without altering the null distribution of the target pivots. We then construct two practical detectors. The Belief-weighted (BW) detector extracts a prior estimate from the neighboring pivots through a working Markov chain \cite{rabiner1989} and replaces the alternative densities by a simple heuristic score. The Belief-weighted Likelihood Attribution with Density Estimation (BLADE) detector refines both components. It extracts the prior estimate through a working chain with a richer state space and estimates the alternative densities from a reference sample of token probabilities. Finally, we bound BLADE's power loss from estimation errors and analyze the complexity of both detectors.

\subsection{Neighboring Evidence and Calibration}
\label{sec:prior-neighbors}
\label{sec:calibration-estimated}

Text generated by one mode tends to be contiguous, e.g., a sentence or a paragraph, so the neighboring pivots carry information about the mode at position $t$. Let $\mathcal R_{-t}$ comprise the pivot vectors $\mathbf r_u$ at the unmasked positions $u\neq t$, together with their positions. Conditioning on $\mathcal R_{-t}$ replaces \eqref{eq:mode-prior} by the conditional prior
\begin{equation}
\label{eq:mode-prior-cond}
q_{t,j}=\Pr\bigl(Z_t=j\mid\mathcal R_{-t},\,Z_t\neq0\bigr),\qquad j=1,\dots,K.
\end{equation}
An estimate of $\mathbf q_t$ from $\mathcal R_{-t}$ makes the threshold $c_t$ depend on the neighbors. The Type I guarantee of Section~\ref{sec:optimal} extends to these neighbor-dependent thresholds when the neighbors carry no information about the null target pivots, as the following assumption states.

\begin{assumption}
\label{ass:product}
Under each hypothesis $\mathcal H_z^{(t)}$, $z=0,1,\dots,K$, given the NTP distribution $\mathbf p_t$ and the token $X_t$, $\mathcal R_{-t}$ is independent of the pseudorandom values $\mathcal A_{\kappa_i}(w\mid C_t)$, $w\in\mathcal V$, $i=1,\dots,K$, at position $t$.
\end{assumption}

The following corollary follows directly from Lemma~\ref{lem:pivot-law}.

\begin{corollary}
\label{cor:product}
Under Assumptions~\ref{ass:pseudorandom} and~\ref{ass:product}, conditionally on $\mathcal R_{-t}$, (a) under $\mathcal H_0^{(t)}$, the pivots $r_{t,1},\dots,r_{t,K}$ are independent $\operatorname{Unif}(0,1)$ variables; and (b) under $\mathcal H_j^{(t)}$, $j=1,\dots,K$, the pivots $r_{t,i}$, $i\neq j$, are independent $\operatorname{Unif}(0,1)$ variables, independent of $r_{t,j}$, and $r_{t,j}$ has the nondecreasing density $f_j(x)=\mathbb E[d_tx^{d_t-1}\mid\mathcal R_{-t},\mathcal H_j^{(t)}]$.
\end{corollary}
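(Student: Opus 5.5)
The plan is to condition on $(\mathbf p_t, X_t, \mathcal R_{-t})$ and use the fact that, given $\mathbf p_t$ and $X_t$, the pivots are deterministic functions of the pseudorandom values at position $t$. Assumption~\ref{ass:product} lets us drop $\mathcal R_{-t}$ from the conditioning of those values. Under $\mathcal H_z^{(t)}$ we can then compute the joint conditional law of $(r_{t,1},\dots,r_{t,K})$ exactly as in the single-document setting. Afterwards we integrate out $\mathbf p_t$ and $X_t$, keeping $\mathcal R_{-t}$ fixed.

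For part (a), under $\mathcal H_0^{(t)}$ the token $X_t$ is chosen independently of the keyed values at $C_t$. This covers two cases: no mode acted, or the coupling between the token and its keyed values was lost. In either case, given $(\mathbf p_t, X_t)$, each $r_{t,i}=\mathcal A_{\kappa_i}(X_t\mid C_t)$ is a single coordinate of the independent family in Assumption~\ref{ass:pseudorandom}, evaluated at a fixed token. It is therefore $\operatorname{Unif}(0,1)$, and independence holds across $i$ because distinct keys give independent values. By Assumption~\ref{ass:product}, adding $\mathcal R_{-t}$ to the conditioning does not change this law. Since the resulting law does not depend on $(\mathbf p_t,X_t)$, integrating over their conditional distribution given $\mathcal R_{-t}$ yields i.i.d.\ uniform pivots conditionally on $\mathcal R_{-t}$.

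For part (b), under $\mathcal H_j^{(t)}$ the token $X_t$ is a function of $\mathbf p_t$ and the mode-$j$ values $\{U^{(j)}_{t,w}\}_w$ through \eqref{eq:gumbel-max}. For $i\ne j$, the family $\{U^{(i)}_{t,w}\}_w$ is independent of $\{U^{(j)}_{t,w}\}_w$ by Assumption~\ref{ass:pseudorandom}, and hence independent of $X_t$. The argument of part (a) then shows that $r_{t,i}$, $i\neq j$, are i.i.d.\ uniform and independent of $r_{t,j}$ given $(\mathbf p_t,X_t)$. Conditioning on $\mathcal R_{-t}$ preserves this, again by Assumption~\ref{ass:product}.

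The main obstacle is the law of $r_{t,j}$. Lemma~\ref{lem:pivot-law} gives $\Pr(r_{t,j}\le x\mid p_t^\star)=x^{d_t}$, but we need this conditional law to survive conditioning on $\mathcal R_{-t}$. Since $\mathcal R_{-t}$ is independent of the position-$t$ keyed values given $(\mathbf p_t,X_t)$, we first condition on $(\mathbf p_t,X_t,\mathcal R_{-t})$ and apply Lemma~\ref{lem:pivot-law}. This requires checking that the lemma's derivation, which is a computation over the uniform values given $\mathbf p_t$ and the selected token, is unaffected by the extra independent information. This gives $\Pr(r_{t,j}\le x\mid \mathbf p_t,X_t,\mathcal R_{-t})=x^{d_t}$. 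Taking the expectation over $(\mathbf p_t,X_t)$ given $\mathcal R_{-t},\mathcal H_j^{(t)}$ yields the CDF $\mathbb E[x^{d_t}\mid\mathcal R_{-t},\mathcal H_j^{(t)}]$. Differentiating under the expectation, justified by dominated convergence since $d_t\le|\mathcal V|$ is bounded, gives $f_j(x)=\mathbb E[d_tx^{d_t-1}\mid\mathcal R_{-t},\mathcal H_j^{(t)}]$. This density is nondecreasing because each $d_tx^{d_t-1}$ with $d_t\ge1$ is nondecreasing, as in Corollary~\ref{cor:signal-law}.

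The joint independence of $r_{t,j}$ and the other pivots conditionally on $\mathcal R_{-t}$ then follows by the same mixing argument. The conditional joint law factorizes as a product of $f_j$-type factors and uniform factors, and the uniform factors do not depend on $(\mathbf p_t,X_t)$. Hence the mixture still factorizes.
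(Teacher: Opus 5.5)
Your route is the paper's own. The paper states only that the corollary "follows directly from Lemma~\ref{lem:pivot-law}", adding that, by Assumption~\ref{ass:product}, conditioning on $\mathcal R_{-t}$ changes the law of $(\mathbf p_t,X_t)$ but not that of the keyed values given them. That is your condition-on-$(\mathbf p_t,X_t)$, drop $\mathcal R_{-t}$, then integrate argument. You also correctly see that the lemma is needed in the finer form $\Pr(r_{t,j}\le x\mid\mathbf p_t,X_t)=x^{d_t}$. The exponential-race derivation supplies this form, because the minimum is independent of the argmin. The form conditional on $p_t^\star$ alone would not determine the right-hand side once Assumption~\ref{ass:product} has been invoked.

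The one step that fails as written is your justification for differentiating under the expectation: $d_t=1/p_t^\star$ is not bounded by $\lvert\mathcal V\rvert$. The selected token can have arbitrarily small probability; for example, $\mathbf p_t=(1-\epsilon,\epsilon)$ gives $d_t=1/\epsilon$ with probability $\epsilon$. Corollary~\ref{cor:signal-law} bounds only the mean $\mathbb E[d_t\mid\mathbf p_t]$, which equals the support size. Even that bound need not survive conditioning on $\mathcal R_{-t}$, since Assumption~\ref{ass:product} allows $\mathcal R_{-t}$ to change the law of $X_t$ given $\mathbf p_t$.

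The repair is easy. Since $x^{d}=\int_0^x d\,y^{d-1}\,dy$ for $d\ge1$, Tonelli's theorem gives
\[
\mathbb E\bigl[x^{d_t}\mid\mathcal R_{-t},\mathcal H_j^{(t)}\bigr]=\int_0^x\mathbb E\bigl[d_t\,y^{d_t-1}\mid\mathcal R_{-t},\mathcal H_j^{(t)}\bigr]\,dy .
\]
This holds with no integrability condition and identifies $f_j$ as a density directly. Alternatively, dominate the difference quotients by $d_t$. Then note that $\mathbb E[d_t\mid\mathcal R_{-t},\mathcal H_j^{(t)}]<\infty$ almost surely by the tower property, because $\mathbb E[d_t\mid\mathcal H_j^{(t)}]\le\lvert\mathcal V\rvert$ when $X_t$ is an exact draw from $\mathbf p_t$. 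With either fix, the rest of your argument (uniformity, the factorization of the mixture, and monotonicity of $f_j$) goes through.
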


By Assumption~\ref{ass:product}, conditioning on $\mathcal R_{-t}$ changes the distribution of $\mathbf p_t$ and $X_t$ but not that of the pseudorandom values given them. Under $\mathcal H_j^{(t)}$, the neighbors thus affect the target pivot $r_{t,j}$ only through $f_j$. We refer to \eqref{eq:multi-test} under Assumption~\ref{ass:product} as the conditional product model.

A practical detector replaces the unknown prior by an estimate $\widehat{\mathbf q}_t$, called its belief, formed from $\mathcal R_{-t}$, and the unknown densities by working scores $\phi_j:[0,1)\to[0,\infty)$. The following lemma shows that these replacements preserve the exact Type~I error.

\begin{lemma}
\label{lem:calibration}
Let $0<\alpha<1$, and let the prior estimates $\widehat q_{t,j}>0$ and the continuous, strictly increasing working scores $\phi_j$ be functions of $\mathcal R_{-t}$ and of reference data independent of the scored document. Under Assumptions~\ref{ass:pseudorandom} and~\ref{ass:product}, rule \eqref{eq:opt-rule} with $\widehat q_{t,j}\phi_j$ in place of $q_{t,j}f_j$, calibrated by \eqref{eq:balance}, satisfies $\Pr(\delta_t\neq0\mid\mathcal R_{-t},\mathcal H_0^{(t)})=\alpha$.
\end{lemma}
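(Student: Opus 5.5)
The plan is to condition on $\mathcal R_{-t}$ (and on the independent reference data) throughout, so that the estimates $\widehat q_{t,j}$ and the working scores $\phi_j$ become fixed, deterministic objects. Everything then reduces to the null-CDF computation of Section~\ref{sec:opt-calibration} with the true prior and densities replaced by these frozen quantities. By Corollary~\ref{cor:product}(a), under $\mathcal H_0^{(t)}$ and conditionally on $\mathcal R_{-t}$, the pivots $r_{t,1},\dots,r_{t,K}$ are i.i.d.\ $\operatorname{Unif}(0,1)$. Because the reference data are independent of the scored document, conditioning on them as well leaves this law unchanged. The key observation is that the argument behind \eqref{eq:null-cdf} never used that $q_{t,j}f_j$ is the true prior-weighted density. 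It used only the uniform product null and the monotonicity of each score.

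The first step defines the estimated score $\widehat g_t(\mathbf r_t)=\max_j\widehat q_{t,j}\phi_j(r_{t,j})$. We extend $\phi_j^{-1}$ by clipping as before; since $\phi_j$ is defined on $[0,1)$, the natural upper endpoint is $\sup\phi_j$, possibly $\infty$. Conditionally, the null CDF of $\widehat g_t$ is
\begin{equation*}
\widehat A_t(c)=\prod_{j=1}^K\phi_j^{-1}\bigl(c/\widehat q_{t,j}\bigr),
\end{equation*}
by independence and because $\{\widehat q_{t,j}\phi_j(r_{t,j})\le c\}=\{r_{t,j}\le\phi_j^{-1}(c/\widehat q_{t,j})\}$ for strictly increasing $\phi_j$. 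The second step shows that $\widehat A_t$ is continuous and nondecreasing in $c$, and strictly increasing where it lies in $(0,1)$. Continuity holds because each clipped inverse of a continuous strictly increasing function is continuous. Strict increase holds because at any $c$ with $0<\widehat A_t(c)<1$, every factor is positive and at least one factor lies strictly inside $(0,1)$, where it is strictly increasing. We also need the limits: $\widehat A_t(c)\to0$ as $c\downarrow$ (at $c\le\min_j\widehat q_{t,j}\phi_j(0)$ it is $0$), and $\widehat A_t(c)\to1$ as $c\to\infty$. Together with the intermediate value theorem, these give a unique $\widehat c_t$, a function of $\mathcal R_{-t}$ and the reference data, solving the calibration equation $\widehat A_t(\widehat c_t)=1-\alpha$.

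The final step computes $\Pr(\delta_t\neq0\mid\mathcal R_{-t},\mathcal H_0^{(t)})=\Pr(\widehat g_t>\widehat c_t\mid\cdot)=1-\widehat A_t(\widehat c_t)=\alpha$. Because $\widehat c_t$ is measurable with respect to the conditioning variables, it may be treated as a constant inside the conditional probability; this is exactly where Assumption~\ref{ass:product} and the independence of the reference data enter. Equivalently, one can invoke Lemma~\ref{lem:direct-eval} with $\widehat A_t$, noting that $\widehat A_t(\widehat g_t)$ is conditionally uniform because $\widehat A_t$ is continuous.

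I expect the main obstacle to be the technical care around the conditioning and the boundary behaviour of the scores. The conditioning must be set up so that plugging the data-dependent threshold into the null probability is legitimate, which requires a regular conditional distribution and the measurability of $\widehat c_t$. At the boundary, $\phi_j$ lives on $[0,1)$ and may be unbounded, so the clipped inverse and the limit $\widehat A_t\to1$ need separate handling. The positivity $\widehat q_{t,j}>0$ ensures every mode enters the product, which avoids the degenerate omitted-mode case.
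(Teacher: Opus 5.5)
Your proposal is correct and takes the same route as the paper's omitted proof. That proof conditions on $\mathcal R_{-t}$, uses Corollary~\ref{cor:product}(a) to obtain the uniform product null, and repeats the product-form null-CDF derivation behind \eqref{eq:balance} with the frozen scores $\widehat q_{t,j}\phi_j$ in place of $q_{t,j}f_j$. Your added care fills in details the paper leaves implicit: the clipped inverse for possibly unbounded $\phi_j$ on $[0,1)$ (as for BW's $h$), the limit $\widehat A_t\to1$, and the measurability of $\widehat c_t$.
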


\begin{IEEEproof}
The proof follows the derivation of \eqref{eq:balance}, given Corollary~\ref{cor:product}(a) and the monotonicity of the scores, and is omitted.
\end{IEEEproof}

Calibration therefore holds however inaccurate the prior estimates and working scores are. Their errors affect only the power.

From here on, $\mathbf q_t$, $F_{H,j}$, $f_j$, $P_j$ and $P_{\mathrm I}$ are conditional on $\mathcal R_{-t}$, and a decision rule may also depend on $\mathcal R_{-t}$, e.g., through $\mathbf q_t$. In the following, we construct two practical detectors that implement rule \eqref{eq:opt-rule} with estimated priors and working scores.

\subsection{The BW Detector}
\label{sec:bw}

BW implements the decision structure of \eqref{eq:opt-rule} in a simple, heuristic manner. It estimates $\mathbf q_t$ from the neighboring pivots with a working Markov model of local persistence and replaces the unknown alternative density $f_j(r)$ by the fixed inverse-$p$-value score $h(r)=(1-r)^{-1}$.

\subsubsection{Prior Estimate}
\label{sec:bw-prior}

To approximate $\mathbf{q}_t$ from neighboring pivots, BW models the mode of the span containing position $t$ by a working Markov chain $M_t\in\{1,\dots,K\}$ with no null state and transition matrix
\begin{equation}
\label{eq:bw-transition}
\mathbf{P}=\rho_{\mathrm{sw}}\mathbf{I}_K+\frac{1-\rho_{\mathrm{sw}}}{K-1}\bigl(\mathbf{1}_K\mathbf{1}_K^{\top}-\mathbf{I}_K\bigr),
\end{equation}
where $\rho_{\mathrm{sw}}$ is the persistence probability. The observation evidence at position $t$ is $\mathbf l_t=(l_{t,1},\dots,l_{t,K})^{\top}$ with $l_{t,j}=\min\{h(r_{t,j}),\bar l\}$, where the fixed cap $\bar l$ prevents a single extreme pivot from dominating the recursions. At masked positions, $\mathbf l_t=\mathbf1_K$, so state transitions through them are preserved.

The forward--backward recursions \cite{rabiner1989}
\begin{equation}
\label{eq:fb-recursion}
\begin{aligned}
\mathbf{a}_{t+1}&=\mathbf{P}^{\top}\operatorname{norm}(\mathbf{a}_t\odot\mathbf{l}_t),\\
\mathbf{b}_t&=\operatorname{norm}\bigl(\mathbf{P}(\mathbf{l}_t\odot\mathbf{b}_{t+1})\bigr)
\end{aligned}
\end{equation}
are initialized with $\mathbf{a}_1=\mathbf{b}_{n+1}=\mathbf{1}_K/K$ for a document of $n$ tokens. Since $\mathbf a_t$ uses the evidence before $t$ and $\mathbf b_{t+1}$ the evidence after $t$, their product excludes $\mathbf l_t$ and is a function of $\mathcal R_{-t}$, as Lemma~\ref{lem:calibration} requires. Mixing the normalized product with the uniform distribution gives
\begin{equation}
\label{eq:bw-prior}
\widehat{\mathbf{q}}_t=(1-\varepsilon)\operatorname{norm}(\mathbf{a}_t\odot\mathbf{b}_{t+1})+\varepsilon\mathbf{1}_K/K.
\end{equation}
Here $\varepsilon\in(0,1]$ keeps every component of $\widehat{\mathbf q}_t$ positive.

\subsubsection{Working Score}
\label{sec:bw-score}

Under the null, $1-r_{t,j}$ is an exact $p$-value (Section~\ref{sec:prelim-gumbel}), and $h$ is its reciprocal. Like $f_j$, $h$ increases with the pivot, so BW uses it in place of $f_j$ in \eqref{eq:opt-rule}. Since $h$ is not integrable on $[0,1]$, it is a working score rather than a density estimate.

\subsubsection{Decision Rule}

With the prior estimate \eqref{eq:bw-prior} and the working score $h$, BW replaces $q_{t,j}f_j(r_{t,j})$ in \eqref{eq:opt-rule} by $\widehat q_{t,j}h(r_{t,j})=\widehat q_{t,j}/(1-r_{t,j})$. The condition $\widehat q_{t,j}h(r_{t,j})>c_t$ is equivalent to $r_{t,j}>1-\widehat q_{t,j}/c_t$, so \eqref{eq:balance} calibrates the threshold $c_t$ for this score through
\begin{equation}
\label{eq:bw-balance}
\prod_{j=1}^K\left(1-\frac{\widehat q_{t,j}}{c_t}\right)=1-\alpha,\qquad c_t>\max_j\widehat q_{t,j}.
\end{equation}
The solution is unique, since the product increases continuously from $0$ to $1$ on this interval. The per-mode levels are
\begin{equation}
\label{eq:bw-levels}
s_{t,j}=\frac{\widehat q_{t,j}}{c_t},
\end{equation}
which are proportional to the belief. The resulting BW decision rule has the form of \eqref{eq:opt-rule}:
\begin{equation}
\label{eq:bw-rule}
\delta_t=\begin{cases}0, & \max_i\widehat q_{t,i}h(r_{t,i})\le c_t,\\[2pt] \arg\max_j\widehat q_{t,j}h(r_{t,j}), & \text{otherwise}.\end{cases}
\end{equation}

For the observed score $v_t=\max_j\widehat q_{t,j}h(r_{t,j})$, the null CDF $A_t(v_t)=\prod_{j=1}^K(1-\widehat q_{t,j}/v_t)$ is strictly increasing above $\max_j\widehat q_{t,j}$, and $h$ is continuous and strictly increasing on $[0,1)$. By Lemmas~\ref{lem:direct-eval} and~\ref{lem:calibration}, rule \eqref{eq:bw-rule} is conditionally calibrated and rejects when $A_t(v_t)>1-\alpha$, without computing $c_t$ (Algorithm~\ref{alg:bw}).

The choice of $h$ thus gives BW closed-form per-mode levels \eqref{eq:bw-levels} and a closed-form null CDF. The BLADE detector below refines both the prior estimate and the working score to improve the power.

\begin{algorithm}[!t]
\caption{BW Attribution Detector of One Document}
\label{alg:bw}
\begin{algorithmic}[1]
\REQUIRE pivots $\{\mathbf r_t\}_{t=1}^n$ for $K$ candidate modes, repeated-context mask, fixed parameters $(\rho_{\mathrm{sw}},\bar l)$, $\alpha$ and $\varepsilon$.
\STATE Form $\mathbf P$ by \eqref{eq:bw-transition} and initialize $\mathbf a_1=\mathbf b_{n+1}=\mathbf1_K/K$.
\STATE Compute $\mathbf l_t=(l_{t,1},\dots,l_{t,K})^{\top}$ with $l_{t,j}=\min\{(1-r_{t,j})^{-1},\bar l\}$; set $\mathbf l_t=\mathbf1_K$ at masked positions.
\STATE Run the forward--backward recursions \eqref{eq:fb-recursion} .
\FOR{each unmasked $t$}
\STATE Compute $\widehat{\mathbf q}_t$ by \eqref{eq:bw-prior}.
\STATE Compute $v_t=\max_j\widehat q_{t,j}/(1-r_{t,j})$.
\STATE Compute $A_t(v_t)=\prod_{j=1}^K(1-\widehat q_{t,j}/v_t)$.
\STATE If $A_t(v_t)>1-\alpha$, set $\delta_t=\arg\max_j\widehat q_{t,j}/(1-r_{t,j})$; otherwise set $\delta_t=0$.
\ENDFOR
\end{algorithmic}
\end{algorithm}

\subsection{The BLADE Detector}
\label{sec:blade}

BLADE replaces the heuristic score $h$ by an estimate of the alternative density from a reference sample. It estimates $\mathbf q_t$ with a working chain that adds a null state for unwatermarked spans and uses the density estimate as its evidence. We therefore estimate the density first.

\subsubsection{Density Estimate}
\label{sec:blade-density}

Let $\{p_i^\star\}_{i=1}^N$ be a reference sample of selected-token probabilities under the generator's sampling distribution, supplied offline by the model provider independently of the scored document. Replacing the expectation in \eqref{eq:signal-law} by its empirical average gives
\begin{equation}
\label{eq:density-est}
\widehat f(x)=\frac{1}{N}\sum_{i=1}^N d_i x^{d_i-1},\qquad d_i=1/p_i^\star.
\end{equation}
By \eqref{eq:signal-law}, the alternative density depends on the generator only through the distribution of $p_t^\star$, which is set by the model but not by the key. A reference sample from the same generator thus serves all modes, and we use $\widehat f$ as a common working approximation to the conditional densities $f_j$. It is computed once offline and does not depend on $t$ or the scored document.

\subsubsection{Prior Estimate}
\label{sec:blade-prior}

Let $S_t\in\{0,1,\dots,K\}$ denote the mode of the span containing position $t$, with $S_t=0$ for an unwatermarked span. Unlike the BW state $M_t$, $S_t$ includes a null state. A token in a mode-$j$ span has $Z_t=j$ if it retains the watermark and $Z_t=0$ otherwise, and we model retention as independent across positions with probability $\theta\in(0,1)$.

To accommodate unwatermarked spans, BLADE models $S_t$ by a $(K+1)$-state working Markov chain with transition matrix
\begin{equation}
\label{eq:blade-transition}
\begin{aligned}
\bar{\mathbf{P}}&=\begin{bmatrix}\rho_0 & \frac{1-\rho_0}{K}\mathbf{1}_K^{\top}\\[3pt] (1-\rho_1)(1-\beta)\mathbf{1}_K & \mathbf Q\end{bmatrix},\\[3pt]
\mathbf Q&=\rho_1\mathbf{I}_K+\frac{(1-\rho_1)\beta}{K-1}\bigl(\mathbf{1}_K\mathbf{1}_K^{\top}-\mathbf{I}_K\bigr),
\end{aligned}
\end{equation}
where $\mathbf Q$ is the mode-to-mode block. Here $\rho_0\in[0,1]$ and $\rho_1\in(0,1]$ are the persistence probabilities of unwatermarked and watermarked spans, and $\beta\in[0,1]$ is the probability that leaving a mode span leads directly to another mode rather than to the null state.

Under this chain, all pivots at position $t$ are uniform given $S_t=0$. Given $S_t=j\ge1$, the mode-$j$ pivot has the density $(1-\theta)+\theta\widehat f(r_{t,j})$, and the other pivots are uniform. The likelihood ratios relative to the null span state, $\lambda_{t,j}=(1-\theta)+\theta\widehat f(r_{t,j})$, form the evidence vector $\bar{\boldsymbol\lambda}_t=(1,\lambda_{t,1},\dots,\lambda_{t,K})^{\top}$, which replaces $\mathbf l_t$ in the message recursions.

We obtain the $(K+1)$-dimensional messages $\bar{\mathbf a}_t$ and $\bar{\mathbf b}_t$ by using $\bar{\mathbf P}$ and $\bar{\boldsymbol\lambda}_t$ in \eqref{eq:fb-recursion}, with $\bar{\mathbf a}_1=\bar{\mathbf b}_{n+1}=\mathbf1_{K+1}/(K+1)$ and $\bar{\boldsymbol\lambda}_t=\mathbf1_{K+1}$ at masked positions. As in BW, their normalized product $\boldsymbol\gamma_t=\operatorname{norm}(\bar{\mathbf a}_t\odot\bar{\mathbf b}_{t+1})$ excludes the target evidence. Under the working chain, $\gamma_{t,j}=\Pr(S_t=j\mid\mathcal R_{-t})$ for $j=0,1,\dots,K$.

Since retention has the common probability $\theta$ given $S_t$, the working chain gives $\Pr(Z_t=j\mid\mathcal R_{-t})=\theta\gamma_{t,j}$ for $j\ge1$, where $\sum_{i=1}^K\gamma_{t,i}>0$ because the likelihoods are positive and $\rho_1>0$. Conditioning on $Z_t\neq0$ cancels $\theta$, and the uniform mixing of \eqref{eq:bw-prior} yields the estimate of \eqref{eq:mode-prior-cond}
\begin{equation}
\label{eq:blade-prior}
\widehat q_{t,j}=(1-\varepsilon)\frac{\gamma_{t,j}}{\sum_{i=1}^K\gamma_{t,i}}+\frac{\varepsilon}{K},\quad j=1,\dots,K.
\end{equation}

\subsubsection{Decision Rule}
\label{sec:blade-rule}

With the prior estimate \eqref{eq:blade-prior} and the density estimate \eqref{eq:density-est}, BLADE replaces $q_{t,j}f_j(r_{t,j})$ in \eqref{eq:opt-rule} by $\widehat q_{t,j}\widehat f(r_{t,j})$, which gives the rule
\begin{equation}
\label{eq:blade-rule}
\delta_t=\begin{cases}
0, & \max_i\widehat q_{t,i}\widehat f(r_{t,i})\le c_t,\\[2pt]
\arg\max_j\widehat q_{t,j}\widehat f(r_{t,j}), & \text{otherwise},
\end{cases}
\end{equation}
where $c_t$ is determined by \eqref{eq:private-level}--\eqref{eq:balance} using $\widehat{\mathbf q}_t$ and $\widehat f$. The observed maximum score and its conditional null CDF are
\begin{equation}
\label{eq:blade-direct}
v_t=\max_j\widehat q_{t,j}\widehat f(r_{t,j}),\qquad
A_t(v_t)=\prod_{j=1}^K\widehat f^{-1}\!\left(\frac{v_t}{\widehat q_{t,j}}\right),
\end{equation}
where $\widehat f^{-1}$ is the clipped inverse of Section~\ref{sec:opt-calibration}. Since the prior estimates are positive and $\widehat f$ is continuous and strictly increasing whenever some $d_i>1$, Lemmas~\ref{lem:direct-eval} and~\ref{lem:calibration} apply: the calibrated rule \eqref{eq:blade-rule} rejects when $A_t(v_t)>1-\alpha$ and reports the maximizing mode (Algorithm~\ref{alg:blade}). Appendix~\ref{app:E} gives an optional explicit computation of $c_t$ and the per-mode levels.

\begin{algorithm}[!t]
\caption{BLADE Attribution Detector of One Document}
\label{alg:blade}
\begin{algorithmic}[1]
\REQUIRE pivots $\{\mathbf r_t\}_{t=1}^n$ for $K$ candidate modes, repeated-context mask, offline density estimate $\widehat f$ in \eqref{eq:density-est}, fixed parameters $(\theta,\rho_0,\rho_1,\beta)$, $\alpha$ and $\varepsilon$.
\STATE Form $\bar{\mathbf P}$ by \eqref{eq:blade-transition} and initialize $\bar{\mathbf a}_1=\bar{\mathbf b}_{n+1}=\mathbf1_{K+1}/(K+1)$.
\STATE Compute $\bar{\boldsymbol\lambda}_t=(1,\lambda_{t,1},\dots,\lambda_{t,K})^{\top}$ with $\lambda_{t,j}=(1-\theta)+\theta\widehat f(r_{t,j})$; set $\bar{\boldsymbol\lambda}_t=\mathbf1_{K+1}$ at masked positions.
\STATE Run recursions \eqref{eq:fb-recursion} with $\bar{\mathbf P}$ and $\bar{\boldsymbol\lambda}_t$.
\FOR{each unmasked $t$}
\STATE Compute $\boldsymbol\gamma_t=\operatorname{norm}(\bar{\mathbf a}_t\odot\bar{\mathbf b}_{t+1})$ and $\widehat{\mathbf q}_t$ by \eqref{eq:blade-prior}.
\STATE Compute $v_t=\max_j\widehat q_{t,j}\widehat f(r_{t,j})$.
\STATE Compute $A_t(v_t)=\prod_{j=1}^K\widehat f^{-1}(v_t/\widehat q_{t,j})$ with the clipped inverse.
\STATE If $A_t(v_t)>1-\alpha$, set $\delta_t=\arg\max_j\widehat q_{t,j}\widehat f(r_{t,j})$; otherwise set $\delta_t=0$.
\ENDFOR
\end{algorithmic}
\end{algorithm}

\subsection{Effect of Estimation Errors on BLADE}
\label{sec:analysis-blade}

By Lemma~\ref{lem:calibration}, the estimation errors of BLADE affect only its power. By Corollary~\ref{cor:product}, Theorem~\ref{thm:optimal} holds conditionally on $\mathcal R_{-t}$, so we bound this power loss relative to the optimal rule, conditionally on $\mathcal R_{-t}$ and the reference data.

\begin{proposition}
\label{prop:blade-loss}
Under the conditions of Theorem~\ref{thm:optimal} and Assumption~\ref{ass:product}, if some $d_i>1$, the BLADE rule $\delta_t$ in \eqref{eq:blade-rule}, with $c_t$ solving \eqref{eq:balance} exactly, satisfies
\begin{equation}
\label{eq:blade-loss}
\begin{aligned}
0\le{}&\mathrm{OPT}-\sum_{j=1}^Kq_{t,j}P_j(\delta_t)\\
\le{}&\|\mathbf q_t-\widehat{\mathbf q}_t\|_1
+\frac12\sum_{j=1}^Kq_{t,j}\|f_j-\widehat f\|_{L^1}.
\end{aligned}
\end{equation}
Here $\mathrm{OPT}$ denotes the optimal value of \eqref{eq:problem}.
\end{proposition}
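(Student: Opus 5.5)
The plan is a three-term decomposition around a surrogate objective in which $(\widehat{\mathbf q}_t,\widehat f)$ replace $(\mathbf q_t,f_j)$.

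\emph{Setup.} For any rule $\delta$ with $P_{\mathrm I}(\delta)=\alpha$, write
\[
V(\delta)=\sum_jq_{t,j}P_j(\delta)=\sum_j\int\mathbb I\{\delta(\mathbf r)=j\}\,q_{t,j}f_j(r_j)\,d\mathbf r,
\]
and define the surrogate
\[
\widehat W(\delta)=\sum_j\int\mathbb I\{\delta(\mathbf r)=j\}\,\widehat q_{t,j}\widehat f(r_j)\,d\mathbf r .
\]
The lower bound $0\le\mathrm{OPT}-V(\delta_t)$ is immediate. By Lemma~\ref{lem:calibration}, BLADE has conditional Type I error exactly $\alpha$, so it is feasible for \eqref{eq:problem} conditionally on $\mathcal R_{-t}$. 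By Corollary~\ref{cor:product}, Theorem~\ref{thm:optimal} applies conditionally, so $\mathrm{OPT}=V(\delta_t^\star)$.

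\emph{Step 1: BLADE is optimal for the surrogate.} Because some $d_i>1$, $\widehat f$ is a continuous, strictly increasing probability density on $[0,1]$, and $\widehat q_{t,j}>0$. Rereading the proof of Theorem~\ref{thm:optimal} shows that it uses only the Neyman--Pearson/Lagrangian structure with the uniform product null. Applied to the weights $\widehat q_{t,j}\widehat f$, it shows that $\delta_t$ maximizes $\widehat W$ among all rules with Type I error at most $\alpha$. Hence $\widehat W(\delta_t^\star)\le\widehat W(\delta_t)$, and
\[
\mathrm{OPT}-V(\delta_t)\le\bigl[V(\delta_t^\star)-\widehat W(\delta_t^\star)\bigr]+\bigl[\widehat W(\delta_t)-V(\delta_t)\bigr]
=\sum_j\int\bigl(\mathbb I\{\delta_t^\star=j\}-\mathbb I\{\delta_t=j\}\bigr)\bigl(q_{t,j}f_j-\widehat q_{t,j}\widehat f\bigr)(r_j)\,d\mathbf r .
\]

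\emph{Step 2: marginalize and split.} Integrate out the coordinates other than $r_j$. Let $m_j^\star(x)$ and $m_j(x)$ be the conditional probabilities that $\delta_t^\star$ and $\delta_t$, respectively, report $j$ given $r_j=x$, under uniform other pivots. Both take values in $[0,1]$ and are nondecreasing in $x$. Split the integrand as
\[
q_{t,j}f_j-\widehat q_{t,j}\widehat f=(q_{t,j}-\widehat q_{t,j})\widehat f+q_{t,j}(f_j-\widehat f).
\]
\begin{itemize}
\item \emph{Prior term.} It equals $(q_{t,j}-\widehat q_{t,j})\int(m_j^\star-m_j)\widehat f$. Since $m_j^\star-m_j\in[-1,1]$ and $\int\widehat f=1$, its absolute value is at most $|q_{t,j}-\widehat q_{t,j}|$. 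Summing over $j$ gives $\|\mathbf q_t-\widehat{\mathbf q}_t\|_1$.
\item \emph{Density term.} It equals $q_{t,j}\int_0^1(m_j^\star-m_j)(f_j-\widehat f)\,dx$. Here I use $\int(f_j-\widehat f)=0$: for a weight $w$ with values in an interval of length $1$, $\int w(f_j-\widehat f)\le\frac12\|f_j-\widehat f\|_{L^1}$, because the positive part of $f_j-\widehat f$ integrates to half the $L^1$ norm.
\end{itemize}

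\emph{Main obstacle.} The weight $m_j^\star-m_j$ a priori ranges over $[-1,1]$, which is an interval of length $2$. A crude estimate therefore loses a factor $2$ and gives only $\|f_j-\widehat f\|_{L^1}$.

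\begin{itemize}
\item \emph{First attempt.} Avoid paying the density error twice: the surrogate comparison in Step 1 is not tied to $\widehat q_{t,j}$, so use Step 1 only to absorb the prior error. I would introduce an intermediate surrogate $\widetilde W$ with weights $q_{t,j}\widehat f$, together with its optimal rule $\widetilde\delta$. Then compare $V$ with $\widetilde W$ along a single rule, where the weight is $m_j\in[0,1]$ and the half-$L^1$ bound applies directly. The exchange between $\widetilde\delta$ and $\delta_t$ should cost only $\|\mathbf q_t-\widehat{\mathbf q}_t\|_1$.
\item \emph{Fallback.} Use the monotone threshold structure of both rules: $m_j^\star$ and $m_j$ are nondecreasing and both rules spend the same budget $\alpha$. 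From this I would try to show that the density error enters only through one of the two rules.
\end{itemize}

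\emph{Sharpness.} Take $K=1$ and $\mathbf q_t=\widehat{\mathbf q}_t$. Then both rules are the same single threshold test, so this example cannot show sharpness. Instead, I would let $K=2$ with $f_1\neq f_2$, chosen so that $\widehat f$ mis-ranks the two modes on a set carrying half the $L^1$ mass. Letting this configuration concentrate should make the ratio of the loss to $\frac12\sum_jq_{t,j}\|f_j-\widehat f\|_{L^1}$ tend to $1$.
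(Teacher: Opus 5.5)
\paragraph{Verdict.} Your Steps 1 and 2 follow the paper's decomposition exactly, and your prior-term bound is the paper's. The paper's $\Delta_j=\nu_j^{\delta_t^\star}-\nu_j^{\delta_t}$ is your $\Delta_j:=m_j^\star-m_j$, and it likewise uses $\sum_j\widehat q_{t,j}\int_0^1\Delta_j\widehat f\le0$ from Theorem~\ref{thm:optimal} applied to the estimated inputs. There is, however, a genuine gap at the step you call the main obstacle.

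\paragraph{Where your remedies fall short.} As written, your argument gives coefficient $1$ on $\sum_jq_{t,j}\lVert f_j-\widehat f\rVert_{L^1}$, not the claimed $\frac12$, and neither remedy fixes this. The intermediate surrogate $\widetilde W$ does not avoid paying twice. To get from $V(\delta_t^\star)$ to $V(\delta_t)$ you still pass from $V$ to $\widetilde W$ at $\delta_t^\star$ and back from $\widetilde W$ to $V$ at $\delta_t$. Together these two passages are exactly $\sum_jq_{t,j}\int_0^1\Delta_j(f_j-\widehat f)$. If you bound them one rule at a time, with weights $m_j^\star,m_j\in[0,1]$, each costs $\frac12\lVert f_j-\widehat f\rVert_{L^1}$, and the factor $2$ returns. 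Your fallback aims to make the density error enter through only one rule, which is not what the correct argument does, and the equal-budget condition plays no role.

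\paragraph{The missing step.} It is one line and uses only facts you already stated: the difference of two nondecreasing $[0,1]$-valued functions has oscillation at most $1$. For $x\le y$,
\[
\Delta_j(y)-\Delta_j(x)=\bigl[m_j^\star(y)-m_j^\star(x)\bigr]-\bigl[m_j(y)-m_j(x)\bigr]
\]
is a difference of two numbers in $[0,1]$, so it lies in $[-1,1]$. Hence $\sup_x\Delta_j(x)-\inf_x\Delta_j(x)\le1$, even though $\Delta_j$ itself may take values anywhere in $[-1,1]$. Now apply your own half-$L^1$ lemma: center $\Delta_j$ at the midpoint of its range and use $\int_0^1(f_j-\widehat f)=0$. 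This gives $\bigl\lvert\int_0^1\Delta_j(f_j-\widehat f)\bigr\rvert\le\frac12\lVert f_j-\widehat f\rVert_{L^1}$, which closes the proof exactly as in the paper.

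\paragraph{Two smaller points.} You assert the monotonicity of $m_j^\star$ and $m_j$ without justifying it. Both rules report the argmax of coordinatewise nondecreasing scores above a threshold that does not depend on $\mathbf r_t$, so raising $r_j$ preserves a report of mode $j$; ties have measure zero. The sharpness discussion is not part of this statement. For reference, the paper's tightness example takes $\widehat f(x)=f_2(x)=2x$, $f_1(x)=Dx^{D-1}$ and $\mathbf q_t=\widehat{\mathbf q}_t=(0.4,0.6)$, with $D\to\infty$.
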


\begin{IEEEproof}
See Appendix~\ref{app:B}.
\end{IEEEproof}

The bound separates the two estimated inputs of BLADE: the prior error and the density error weighted by the prior. The prior in \eqref{eq:blade-loss} is the final estimate \eqref{eq:blade-prior}.

To distinguish reference-sample error from mismatch in the alternative distribution, let the reference mixture density $f_{\mathrm{ref}}$ be the population counterpart of \eqref{eq:density-est}. The triangle inequality gives
\begin{equation}
\label{eq:density-split}
\begin{aligned}
\frac12\sum_{j=1}^Kq_{t,j}\lVert f_j-\widehat f\rVert_{L^1}
\le{}&\frac12\lVert f_{\mathrm{ref}}-\widehat f\rVert_{L^1}\\
&+\frac12\sum_{j=1}^Kq_{t,j}\lVert f_j-f_{\mathrm{ref}}\rVert_{L^1}.
\end{aligned}
\end{equation}
For an i.i.d. reference sample, the first term, the reference-sample error, tends to zero almost surely as $N\to\infty$ (Appendix~\ref{app:B}). The second term measures the mismatch between the reference mixture density, which averages over positions, and the conditional densities $f_j$, which condition on the neighbors. If both terms in \eqref{eq:blade-loss} tend to zero, BLADE's correct-label power tends to $\mathrm{OPT}$.

\begin{figure}[!t]
\centering
\includegraphics{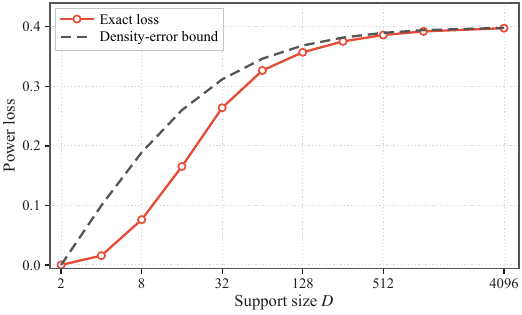}
\caption{Exact power loss of rule \eqref{eq:blade-rule} relative to rule \eqref{eq:opt-rule} and the density-error bound in \eqref{eq:blade-loss} versus $D$, for $K=2$, $\alpha=0.05$ and $\widehat{\mathbf q}_t=\mathbf q_t=(0.4,0.6)$.}
\label{fig:bound-tightness}
\end{figure}

We validate the density term of Proposition~\ref{prop:blade-loss} using a toy example. We set $K=2$, $\alpha=0.05$ and $\widehat{\mathbf q}_t=\mathbf q_t=(0.4,0.6)$, so the prior error is zero. The NTP distributions under modes 1 and 2 are uniform on $D$ and two tokens, respectively, where $D$ is an integer with $2\le D\le\lvert\mathcal V\rvert$, so \eqref{eq:signal-law} gives $f_1(x)=Dx^{D-1}$ and $f_2(x)=2x$. The reference sample is drawn under mode 2, so \eqref{eq:density-est} gives $\widehat f=f_{\mathrm{ref}}=f_2$ exactly for any $N$, and only the mode-1 density error remains. For each $D$, the exact power loss of rule \eqref{eq:blade-rule} relative to rule \eqref{eq:opt-rule} is compared to the density-error bound in \eqref{eq:blade-loss}. As shown in Fig.~\ref{fig:bound-tightness}, the exact loss stays below the bound for every displayed $D$, as Proposition~\ref{prop:blade-loss} guarantees. The bound is also tight, since the two approach each other as $D$ grows and both tend to $0.4$ (Appendix~\ref{app:B}).

\subsection{Computational Complexity}
\label{sec:complexity}

We analyze Algorithms~\ref{alg:bw} and \ref{alg:blade} for a document of $n$ tokens. The structure of \eqref{eq:bw-transition} and \eqref{eq:blade-transition} reduces each matrix--vector product in \eqref{eq:fb-recursion}, including multiplication by the transpose, to one sum over the mode components and componentwise updates, i.e., $O(K)$ operations. With $O(K)$ operations per position for the evidence, scores and null-CDF product, BW has $O(nK)$ time and memory complexity per document.

BLADE also queries $\widehat f$ and its inverse, $2K$ times per position. Since $\widehat f$ does not depend on the document, we tabulate it once at $G$ ordered points, in $O(NG)$ operations reusable across documents. Each query then takes $O(\log G)$ by binary search and linear interpolation, instead of $O(N)$ by \eqref{eq:density-est}. Any continuous, strictly increasing interpolant evaluated exactly, e.g., piecewise-linear interpolation of an increasing table, satisfies Lemmas~\ref{lem:direct-eval} and \ref{lem:calibration}, so calibration remains exact. Appendix~\ref{app:E} details the log-scale tabulation and the bisection used in the experiments. Algorithm~\ref{alg:blade} thus costs $O(nK\log G)$ time and $O(nK+G)$ working memory, so for fixed $G$, both detectors scale linearly in $n$ and $K$.

\section{Numerical Experiments}
\label{sec:experiments}

We evaluate the detectors on mixed text with real pivots for all candidate modes. The experiments examine three claims. First, the detectors keep the Type I error near the nominal level on human-written and edited text. Second, BLADE attains the highest correct-label power among the practical detectors as the candidate count, substitution rate, generation temperature and generator model vary. Third, a mode-weight illustration on one document shows how the detectors use the neighboring evidence.

\subsection{Experimental Protocol}
\label{sec:exp-protocol}
\label{sec:exp-data}
\label{sec:exp-documents}
\label{sec:exp-null}
\label{sec:exp-detectors}
\label{sec:exp-metrics}

Following \cite{kirchenbauer2023,li2024edits,zhou2027icassp}, we generate continuations of 50-token C4 realnewslike prompts \cite{raffel2020} with OPT-1.3B \cite{zhang2022} at temperature $1$. The continuations are watermarked by \eqref{eq:gumbel-max} under four generating modes with a five-token hash context, and $p_t^\star$ is recorded under the sampling distribution. The $224$ sources are split into $32$ development, $24$ calibration and $168$ test sources, each with a watermarked continuation under one mode and an unwatermarked continuation. A layout draws four test sources, one per mode, and no two layouts share a source. Its mixed document of about $400$ tokens interleaves their four $50$-token watermarked spans with five $40$-token spans from their unwatermarked continuations. A fraction of the watermarked tokens is then replaced by tokens drawn uniformly from the vocabulary, as in \cite{li2024edits}, and a position retains the watermark if the six-token sequence ending at it occurs in the original continuation. Repeated contexts are masked as in Section~\ref{sec:detectors}. Type I error is measured on human-written C4 continuations, where $\mathcal H_0^{(t)}$ holds, and on unretained positions of the edited documents, an operational null.

On $42$ layouts, we vary $K\in\{4,8,\dots,32\}$ at $20\%$ substitution and the substitution rate in $\{0,4,\dots,32\}\%$ at $K=32$. On eight paired layouts with $K=32$ and $20\%$ substitution, we vary the temperature in $\{0.8,0.9,\dots,1.2\}$ and the generator among OPT-1.3B, Qwen2.5-1.5B \cite{qwen2025}, Llama-3.2-1B \cite{grattafiori2024llama3} and Sheared-LLaMA-2.7B \cite{xia2024sheared}.

We compare calibrated Max-Pivot, the Local-Count weighting and BW of \cite{zhou2027icassp}, and BLADE. Throughout, Max-Pivot denotes the calibrated version. The first three use \eqref{eq:bw-rule} with uniform, two-level local and chain-based weights, respectively. For each mode, Local-Count counts the unmasked pivots above $0.95$ within five positions of $t$, excluding $t$. It assigns the weight $0.4+0.6/K$ to the mode with the largest count and $0.6/K$ to the others, breaking ties by the lowest index. BW uses $\rho_{\mathrm{sw}}=0.99$ and $\bar l=100$ as in \cite{zhou2027icassp}. BLADE estimates \eqref{eq:density-est} from the $N=20{,}718$ tokens of $56$ development and calibration sources. For each temperature and model, it re-estimates \eqref{eq:density-est} from $12$ matched sources. Its parameters $\theta=0.3$, $\rho_0=\rho_1=0.98$ and $\beta=0.1$ are selected on development documents and fixed before evaluation. BW and BLADE use $\varepsilon=10^{-3}$, and we set $\alpha=0.05$. The Oracle of Corollary~\ref{cor:benchmarks}(ii), given $S_t=j\ge1$, reports $j$ if $r_{t,j}>0.95$ and $0$ otherwise.

Correct-label power and the misattribution rate are the fractions of retained, unmasked positions assigned to the correct and to a wrong mode, and Type I error is the fraction of null positions detected. For edited text, rates are pooled over layouts within each substitution realization and averaged over the eight realizations. Shaded regions and error bars are $95\%$ percentile intervals from $2{,}000$ bootstrap resamples of layouts. Each resample keeps all eight realizations of a drawn layout and is shared by all detectors, which gives paired intervals for the gains.

\subsection{Type I Error}
\label{sec:exp-type-i}

Fig.~\ref{fig:type-i} tests calibration as the candidate set expands. Panel (a) uses human-written text, and panel (b) uses unretained positions in the edited documents, a harder null population because their neighbors still carry watermark evidence.

\begin{figure}[!t]
\centering
\includegraphics{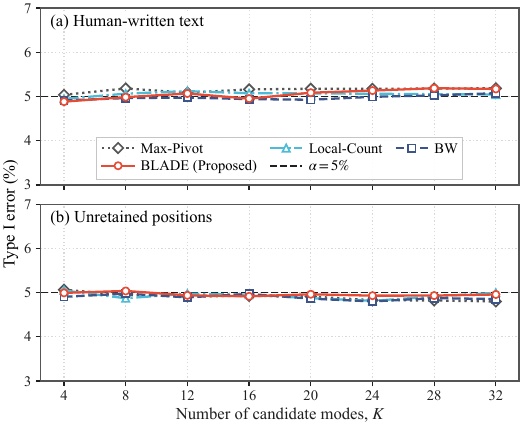}
\caption{Token-level Type I error versus $K$: (a) human-written text; (b) unretained positions.}
\label{fig:type-i}
\end{figure}

All four detectors stay near the nominal level in both populations. On human-written text, their rates lie between $4.88\%$ and $5.19\%$ over the tested candidate counts. On edited text, the mean rates range from $4.80\%$ to $5.06\%$, and BLADE's means lie between $4.92\%$ and $5.04\%$. Increasing the candidate count produces no systematic inflation in either population.

\subsection{Correct-Label Power}
\label{sec:exp-power}

\subsubsection{Candidate Count}
\label{sec:exp-power-k}

Fig.~\ref{fig:power-k} tests robustness to irrelevant candidates: the $20\%$ substitution documents and their four generating modes remain unchanged as $K$ increases, so every added key is a competitor without signal.

\begin{figure}[!t]
\centering
\includegraphics{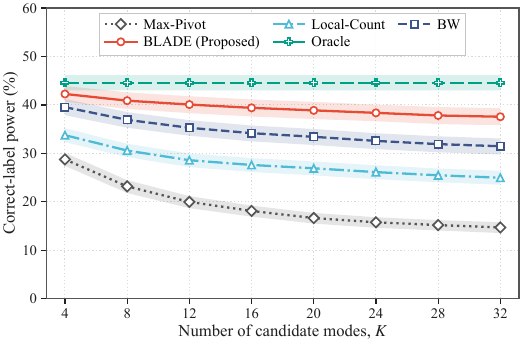}
\caption{Correct-label power versus $K$ at $20\%$ substitution. Shaded regions are $95\%$ bootstrap intervals (Section~\ref{sec:exp-metrics}).}
\label{fig:power-k}
\end{figure}

BLADE has the highest mean correct-label power among the practical detectors throughout the sweep, and the paired $95\%$ interval of its gain over BW excludes zero at every $K$. At $K=32$, Max-Pivot, Local-Count, BW and BLADE attain $14.7\%$, $25.0\%$, $31.5\%$ and $37.6\%$, respectively, and the misattribution rate is $1.89\%$ for BW and $1.18\%$ for BLADE, well below the bound of $5.13\%$ in Lemma~\ref{lem:misattribution}.

BLADE is also the least sensitive to added candidates. From $K=4$ to $K=32$, its power changes by $4.7$ percentage points, compared with $8.1$ for BW, so its gain over BW grows from $2.7$ to $6.1$ points. Max-Pivot's declining power follows the competition mechanism of Section~\ref{sec:form-maxpivot}. The weighted detectors reduce the loss by directing more budget toward modes supported by neighboring evidence.

The Oracle stays at $44.6\%$, since its pivot and threshold do not depend on $K$. Measured against the potential gain from mode information (Section~\ref{sec:opt-rule}), BLADE recovers $85\%$ of the gap between Max-Pivot and the Oracle at $K=4$ and $77\%$ at $K=32$, compared with $68\%$ and $56\%$ for BW.

\subsubsection{Substitution Rate}
\label{sec:exp-power-substitution}

Fig.~\ref{fig:power-substitution} varies the substitution rate at $K=32$ with all detector parameters unchanged. The rate controls how much neighboring evidence survives the edits.

\begin{figure}[!t]
\centering
\includegraphics{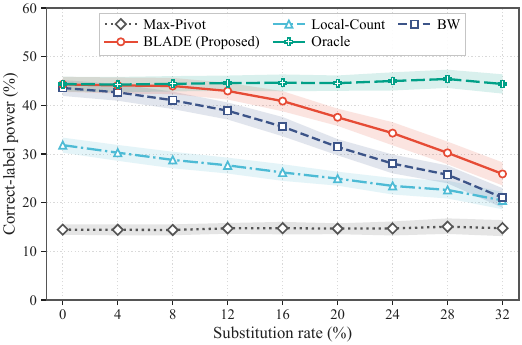}
\caption{Correct-label power versus substitution rate with $K=32$. Shaded regions are $95\%$ bootstrap intervals (Section~\ref{sec:exp-metrics}).}
\label{fig:power-substitution}
\end{figure}

\begin{figure}[!t]
\centering
\includegraphics{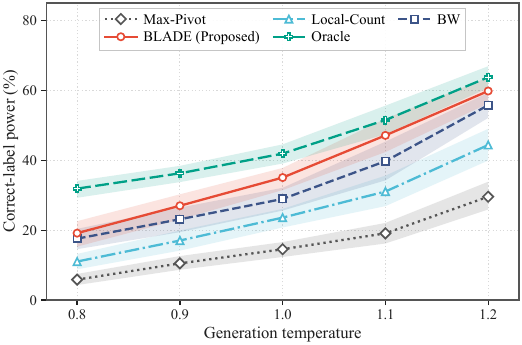}
\caption{Correct-label power versus generation temperature with $K=32$ and $20\%$ substitution. Shaded regions are $95\%$ bootstrap intervals (Section~\ref{sec:exp-metrics}).}
\label{fig:power-temperature}
\end{figure}

\begin{figure}[!t]
\centering
\includegraphics{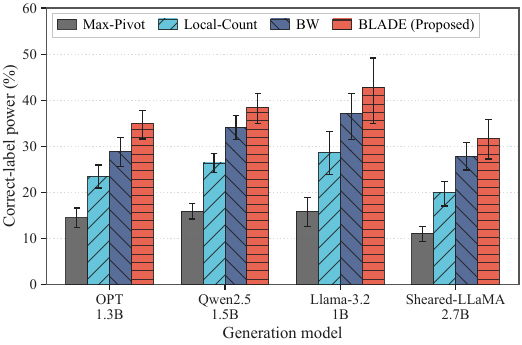}
\caption{Correct-label power across four generator models with $K=32$ and $20\%$ substitution. Error bars are $95\%$ bootstrap intervals (Section~\ref{sec:exp-metrics}).}
\label{fig:power-model}
\end{figure}

\begin{figure*}[!t]
\centering
\includegraphics{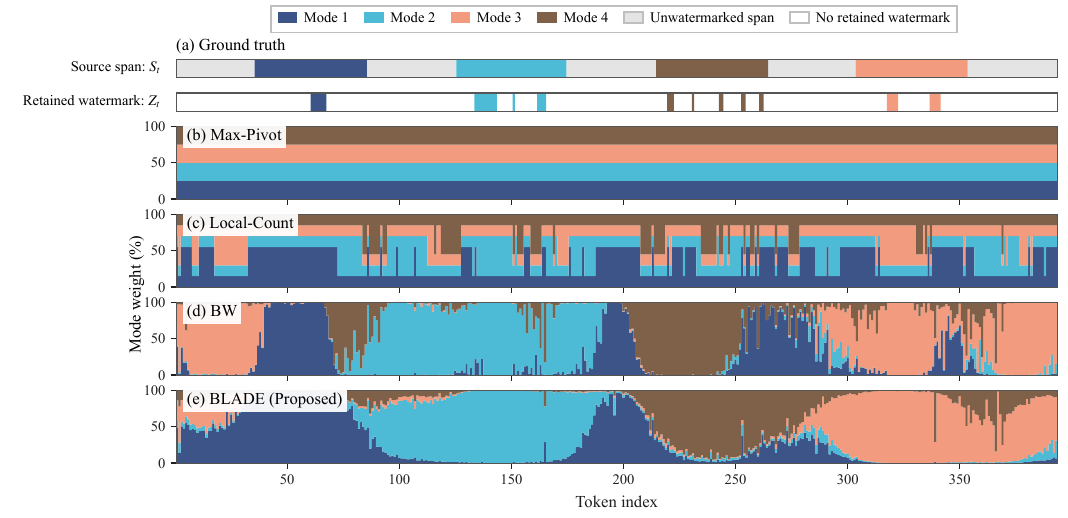}
\caption{Mode weights on a selected document with $K=4$ and $20\%$ substitution: (a) span labels $S_t$ and retained-token states $Z_t$; (b)--(e) stacked mode weights of the four detectors.}
\label{fig:mode-weights}
\end{figure*}

BLADE has the highest mean correct-label power among the practical detectors at all nine rates, and the paired $95\%$ interval of its gain over BW excludes zero at each rate. Without substitution, BLADE nearly matches the Oracle ($44.3\%$ versus $44.4\%$). As substitution removes neighboring evidence, its gain over BW grows from $0.7$ points to $6.3$ points at $24\%$ substitution and remains $4.9$ points at $32\%$.

Max-Pivot and the Oracle decide from the target pivots alone, which a retained six-token sequence leaves unchanged, so their correct-label powers are nearly flat. The weighted detectors draw their gain from neighboring support for the true mode, so their power tracks the amount of this support that survives the edits.

\subsubsection{Generation Temperature}
\label{sec:exp-power-temperature}

Fig.~\ref{fig:power-temperature} varies the signal strength through the generation temperature. The five conditions share the paired prompts, eight layouts and matched $12$-source references of Section~\ref{sec:exp-protocol}.

BLADE has the highest mean correct-label power among the practical detectors at every temperature, rising from $19.1\%$ to $59.8\%$, with gains over BW between $1.6$ and $7.4$ points. Mean power increases for all five methods across the sweep. The Oracle rises from $31.9\%$ to $63.8\%$, and BLADE closes in on it, with the gap narrowing from $12.7$ to $3.9$ points.

The common increase follows the signal mechanism of Section~\ref{sec:prelim-signal}: on the same source spans, raising the temperature lowers the median selected-token probability $p_t^\star$ from $0.343$ to $0.0254$ and raises the single-token power from $32.5\%$ to $64.1\%$ (Fig.~\ref{fig:single-mode}), within one percentage point of Corollary~\ref{cor:signal-law} evaluated at the recorded $p_t^\star$. Since the watermark preserves the NTP distribution, this gain comes from less predictable text.

\subsubsection{Generator Model}
\label{sec:exp-power-model}

Fig.~\ref{fig:power-model} tests whether the gain transfers across the four generator models of Section~\ref{sec:exp-protocol} at temperature $1$. Each model uses the eight paired layouts and a model-matched $12$-source reference, and the OPT bar is computed on the same sample.

BLADE has the highest mean correct-label power among the practical detectors for every tested model, gaining $3.7$--$6.1$ percentage points over BW without retuning its hyperparameters. Only the density estimate is re-estimated for each model.

\subsection{Mode-Weight Illustration}
\label{sec:exp-mode-weights}

Fig.~\ref{fig:mode-weights} compares the mode weights of the four detectors, i.e., their weights on the $K$ modes at each position, on one document with $20\%$ substitution and $K=4$. Panel (a) separates the source-span labels $S_t$ from retained-token states $Z_t$, since substitution can give $Z_t=0$ within a span with $S_t\neq0$. Max-Pivot has uniform weights, while Local-Count changes its preferred mode through local counts. BW often concentrates on one mode but exhibits sharp local changes. BLADE gives more sustained support within the watermarked spans, particularly Mode 2, with more gradual transitions near their boundaries.

BW and BLADE plot $\widehat{\mathbf q}_t$ from \eqref{eq:bw-prior} and \eqref{eq:blade-prior}. All four weight vectors sum to one even at null positions, so detection still depends on the target pivots. On the $42$ retained, unmasked tokens, the mean true-mode weight is $79.8\%$ for BW and $90.1\%$ for BLADE.

\section{Conclusion}
\label{sec:conclusion}

In this paper, we have considered token-level attribution of LLM watermarks among $K$ candidate modes, formulated as maximizing the prior-weighted correct-label power under a Type I error constraint. We have shown that the optimal rule thresholds the largest prior-weighted alternative density at one common threshold, which splits the Type I error budget among the modes and also bounds the misattribution rate. Based on this structure, we have proposed BLADE. It estimates the mode prior from the neighboring pivots with a working Markov chain that includes a null state, and the alternative density from a reference sample of token probabilities. 
We have also shown that its correct-label power approaches the optimum as the prior and density errors vanish. Numerical experiments on mixed documents have shown that BLADE attains the highest mean correct-label power among the practical detectors across candidate counts, substitution rates, generation temperatures and four generator models. Its Type I error stays near the nominal level. 

\appendices
\section{Proofs of Facts \ref{fact:max-pivot} and \ref{fact:cal-max-pivot}}
\label{app:C}

Let $F_{H,j}(x)=x^d$ for all $j$, and let $\tau$ be the threshold of Max-Pivot. Under $\mathcal H_0^{(t)}$, the $K$ pivots are independent $\operatorname{Unif}(0,1)$ variables by Assumption~\ref{ass:pseudorandom}, so the rule accepts the null with probability $\Pr(\max_ir_{t,i}\le\tau)=\tau^K$. Under $\mathcal H_j^{(t)}$, it reports mode $j$ if and only if $r_{t,j}$ exceeds both $\tau$ and the other $K-1$ pivots. Given $r_{t,j}=r$, the latter event has probability $r^{K-1}$, so
\begin{equation}
\label{eq:C-tau}
P_j(\delta_t)=d\int_\tau^1r^{K+d-2}\,dr=\frac{d}{K+d-1}\bigl(1-\tau^{K+d-1}\bigr).
\end{equation}

The factor $d/(K+d-1)=d/K+O(K^{-2})$ is the probability that the true pivot exceeds the other $K-1$ pivots.

\emph{Fact \ref{fact:max-pivot}.} With $\tau=1-\alpha$, the Type I error is $1-(1-\alpha)^K\to1$, and \eqref{eq:C-tau} gives
\begin{equation}
\label{eq:max-pivot-power}
P_j(\delta_t)=\frac{d}{K+d-1}\Bigl[1-(1-\alpha)^{K+d-1}\Bigr].
\end{equation}

The bracket is $1-O\bigl((1-\alpha)^K\bigr)$, so $P_j(\delta_t)=d/K+o(1/K)$.

\emph{Fact \ref{fact:cal-max-pivot}.} With $\tau=(1-\alpha)^{1/K}$, the Type I error is $1-\tau^K=\alpha$, and \eqref{eq:C-tau} gives
\begin{equation}
\label{eq:max-pivot-cal-power}
P_j(\delta_t)=\frac{d}{K+d-1}\Bigl[1-(1-\alpha)^{(K+d-1)/K}\Bigr].
\end{equation}

The bracket is $1-(1-\alpha)^{1+(d-1)/K}=\alpha+O(1/K)$, so $P_j(\delta_t)=\alpha d/K+o(1/K)$.

\section{Proof of Theorem~\ref{thm:optimal}}
\label{app:A}

In Appendices~\ref{app:A}--\ref{app:E}, the position $t$ and the side information $\mathcal R_{-t}$ are fixed as in Section~\ref{sec:prior-neighbors}, and $\mathbf r=(r_1,\dots,r_K)$ denotes a point of $[0,1]^K$, as in \eqref{eq:power-typeI}. We write $\operatorname{Leb}(\mathcal E)$ for the Lebesgue measure of a set $\mathcal E\subseteq[0,1]^K$. By \eqref{eq:power-typeI}, a rule $\delta_t$ with rejection region $\Omega=\{\mathbf r:\delta_t(\mathbf r)\neq0\}$ has $P_{\mathrm I}(\delta_t)=\operatorname{Leb}(\Omega)$.

\begin{IEEEproof}[Proof of Theorem~\ref{thm:optimal}]
\emph{Labeling.} For any rule $\delta_t$ with rejection region $\Omega$, \eqref{eq:power-typeI} gives
\begin{equation}
\label{eq:A-label}
\begin{aligned}
\sum_{j=1}^Kq_{t,j}P_j(\delta_t)
&=\int_{\Omega}\sum_{j=1}^Kq_{t,j}f_j(r_j)\,\mathbb I\{\delta_t(\mathbf r)=j\}\,d\mathbf r\\
&\le\int_{\Omega}g_t(\mathbf r)\,d\mathbf r,
\end{aligned}
\end{equation}
since at most one indicator is nonzero. Equality holds if and only if $\delta_t$ reports $\arg\max_jq_{t,j}f_j(r_j)$ almost everywhere on $\Omega$.

\emph{Rejection region.} The null CDF $A_t$ of $g_t(\mathbf r)$ in \eqref{eq:null-cdf} is continuous and satisfies $A_t(0)=0$. It is strictly increasing wherever $0<A_t(c)<1$, because there every factor $f_j^{-1}(c/q_{t,j})$ is positive and at least one lies in $(0,1)$, where it is strictly increasing. Hence, for $0<\alpha<1$, a unique $c_t>0$ satisfies $A_t(c_t)=1-\alpha$, and the region $\Omega^\star=\{g_t>c_t\}$ has $\operatorname{Leb}(\Omega^\star)=\alpha$. For any $\Omega$ with $\operatorname{Leb}(\Omega)\le\alpha$,
\begin{equation}
\label{eq:A-bathtub}
\begin{aligned}
&\int_{\Omega^\star}g_t-\int_{\Omega}g_t
=\int_{\Omega^\star\setminus\Omega}g_t-\int_{\Omega\setminus\Omega^\star}g_t\\
&\qquad\ge c_t\bigl[\operatorname{Leb}(\Omega^\star\setminus\Omega)-\operatorname{Leb}(\Omega\setminus\Omega^\star)\bigr]\ge0,
\end{aligned}
\end{equation}
because $g_t>c_t$ on $\Omega^\star$, $g_t\le c_t$ outside it, and $\operatorname{Leb}(\Omega^\star)\ge\operatorname{Leb}(\Omega)$. This is the bathtub principle \cite{lieb2001}, i.e., the Neyman--Pearson lemma with $g_t$ in place of an alternative density. By \eqref{eq:A-label} and \eqref{eq:A-bathtub}, $\delta_t^\star$ attains the maximum of \eqref{eq:problem}.

\emph{Union of tails.} By \eqref{eq:private-level} and the monotonicity of $f_j$, $\{q_{t,j}f_j(r_j)>c_t\}=\{r_j>1-s_{t,j}\}$. Thus $\Omega^\star=\bigcup_j\{r_j>1-s_{t,j}\}$, whose complement is the box $\prod_j[0,1-s_{t,j}]$. This gives \eqref{eq:balance}.
\end{IEEEproof}

\section{Proof of Proposition~\ref{prop:blade-loss} and Properties of the Density Term}
\label{app:B}

\begin{IEEEproof}[Proof of Proposition~\ref{prop:blade-loss}]
Let $\delta_t^\star$ be the optimal rule \eqref{eq:opt-rule} and $\delta_t$ the BLADE rule \eqref{eq:blade-rule} with $c_t$ solving \eqref{eq:balance} exactly. Both rules are feasible for \eqref{eq:problem}, since both are calibrated under the same conditional uniform null distribution (Lemma~\ref{lem:calibration}). Hence the loss in \eqref{eq:blade-loss} is nonnegative.

For $\eta\in\{\delta_t^\star,\delta_t\}$, let $\nu_j^\eta(x)$ be the probability that $\eta$ reports mode $j$ when $r_j=x$ and the other $K-1$ pivots are independent $\operatorname{Unif}(0,1)$ variables,
\begin{equation}
\label{eq:B-nu}
\nu_j^\eta(x)=\int_{[0,1]^{K-1}}\mathbb I\{\eta(x,\mathbf r_{-j})=j\}\,d\mathbf r_{-j},
\end{equation}
where $(x,\mathbf r_{-j})$ denotes the point with $r_j=x$. By Corollary~\ref{cor:product}(b), $P_j(\eta)=\int_0^1\nu_j^\eta(x)f_j(x)\,dx$. Both rules report the largest of coordinatewise nondecreasing scores above a fixed threshold, so increasing $r_j$ preserves a report of mode $j$. Since ties are null, each $\nu_j^\eta$ is nondecreasing with values in $[0,1]$. For $x\le y$, the increments $\nu_j^\eta(y)-\nu_j^\eta(x)$ of both rules lie in $[0,1]$, so $\Delta_j=\nu_j^{\delta_t^\star}-\nu_j^{\delta_t}$ satisfies
\begin{equation}
\label{eq:B-osc}
\sup_x\Delta_j(x)-\inf_x\Delta_j(x)\le1,\qquad\lvert\Delta_j(x)\rvert\le1.
\end{equation}

Since $\widehat{\mathbf q}_t>\mathbf 0$ and $\widehat f$ is continuous and strictly increasing, Theorem~\ref{thm:optimal} applied to the estimated inputs shows that $\delta_t$ maximizes the weighted correct-label power computed with $\widehat{\mathbf q}_t$ and $\widehat f$. Hence $\sum_j\widehat q_{t,j}\int_0^1\Delta_j\widehat f\le0$, and subtracting this quantity gives
\begin{equation}
\label{eq:B-split}
\begin{aligned}
\mathrm{OPT}-\sum_jq_{t,j}P_j(\delta_t)
&=\sum_jq_{t,j}\int_0^1\Delta_jf_j\\
&\le\sum_jq_{t,j}\int_0^1\Delta_j\bigl(f_j-\widehat f\bigr)\\
&\quad+\sum_j\bigl(q_{t,j}-\widehat q_{t,j}\bigr)\int_0^1\Delta_j\widehat f.
\end{aligned}
\end{equation}

Let $m_j$ be the midpoint of $\sup_x\Delta_j(x)$ and $\inf_x\Delta_j(x)$, so that $\lvert\Delta_j-m_j\rvert\le1/2$ by \eqref{eq:B-osc}. Since $f_j$ and $\widehat f$ both integrate to one,
\begin{equation}
\label{eq:B-density}
\begin{aligned}
\Bigl\lvert\int_0^1\Delta_j\bigl(f_j-\widehat f\bigr)\Bigr\rvert
&=\Bigl\lvert\int_0^1(\Delta_j-m_j)\bigl(f_j-\widehat f\bigr)\Bigr\rvert\\
&\le\frac12\lVert f_j-\widehat f\rVert_{L^1}.
\end{aligned}
\end{equation}

Moreover, $\lvert\int_0^1\Delta_j\widehat f\rvert\le1$, since $\lvert\Delta_j\rvert\le1$ and $\widehat f$ integrates to one. Applying these bounds to the two sums in \eqref{eq:B-split} gives \eqref{eq:blade-loss}.
\end{IEEEproof}

\emph{Consistency of the density estimate.} Let the reference sample be drawn i.i.d.\ from the distribution underlying $f_{\mathrm{ref}}$ in \eqref{eq:density-split}. Since $\mathbb E[d_i]<\infty$, the strong law of large numbers gives $\widehat f(x)\to f_{\mathrm{ref}}(x)$ almost surely for each $x\in[0,1]$. By Fubini's theorem, with probability one the convergence holds for almost every $x$, and Scheff\'e's lemma gives $\lVert f_{\mathrm{ref}}-\widehat f\rVert_{L^1}\to0$, since both functions are probability densities.

\emph{Sharpness of the density coefficient.} Consider the example of Section~\ref{sec:analysis-blade} with a vocabulary of size $\lvert\mathcal V\rvert\ge D$ for each $D$. Let $\mathrm{OPT}_D$ denote its optimum, and let $T_2(u)=1-(1-u)^2$ be the power of the mode-2 tail at level $u$. Any feasible rule reports mode 2 on a set of null volume at most $\alpha$. Since $f_2$ is increasing, the bathtub principle in \eqref{eq:A-bathtub} gives $P_2\le T_2(\alpha)$, and $P_1\le1$. Hence
\begin{equation}
\label{eq:B-upper}
\mathrm{OPT}_D\le0.4+0.6\,T_2(\alpha)=0.4585.
\end{equation}

Conversely, for large $D$, let $s_D=\log D/D<\alpha$ and $b_D=(\alpha-s_D)/(1-s_D)$. The rule that reports mode 1 if $r_1>1-s_D$, and otherwise mode 2 if $r_2>1-b_D$, has null rejection probability $s_D+(1-s_D)b_D=\alpha$ and weighted correct-label power
\begin{equation}
\label{eq:B-lower}
0.4\bigl[1-(1-s_D)^D\bigr]+0.6\,(1-s_D)\,T_2(b_D)\longrightarrow0.4585,
\end{equation}
since $(1-s_D)^D\le1/D$ and $b_D\to\alpha$. Thus $\mathrm{OPT}_D\to0.4585$, whereas rule \eqref{eq:blade-rule} has power $0.0585$, so its loss tends to $0.4$. For $D>2$, $f_1$ and $f_2$ cross at $a_D=(2/D)^{1/(D-2)}$, and
\begin{equation}
\label{eq:B-l1}
\frac12\lVert f_1-f_2\rVert_{L^1}=\int_0^{a_D}(f_2-f_1)=a_D^2\Bigl(1-\frac2D\Bigr)\longrightarrow1.
\end{equation}

The density term $\frac12q_{t,1}\lVert f_1-\widehat f\rVert_{L^1}$ in \eqref{eq:blade-loss} therefore also tends to $0.4$.

\section{Explicit Threshold Computation}
\label{app:E}
\label{app:E-threshold}

Algorithm~\ref{alg:blade} evaluates the calibrated BLADE decision directly by Lemma~\ref{lem:direct-eval}. When the threshold $c_t$ or the per-mode levels are needed explicitly, they follow from the calibration equation \eqref{eq:balance} with $\widehat{\mathbf q}_t$ and $\widehat f$ in place of $\mathbf q_t$ and $f_j$. By \eqref{eq:blade-prior}, $\widehat q_{t,j}>0$ for every $j$, and $\widehat f$ is continuous and strictly increasing whenever some $d_i>1$. With the clipped inverse of Section~\ref{sec:opt-calibration}, a candidate threshold $c$ gives
\begin{equation}
\label{eq:E-levels}
\begin{aligned}
s_{t,j}(c)&=1-\widehat f^{-1}\Bigl(\frac{c}{\widehat q_{t,j}}\Bigr),\\
A_t(c)&=\prod_{j=1}^K\bigl(1-s_{t,j}(c)\bigr).
\end{aligned}
\end{equation}

As in Appendix~\ref{app:A}, $A_t$ is continuous, nondecreasing and strictly increasing wherever $0<A_t(c)<1$. The endpoints
\begin{equation}
\label{eq:E-bracket}
c_{\mathrm L}=\min_j\widehat q_{t,j}\widehat f(0),\qquad
c_{\mathrm U}=\max_j\widehat q_{t,j}\widehat f(1)
\end{equation}
satisfy $A_t(c_{\mathrm L})=0$ and $A_t(c_{\mathrm U})=1$, so $A_t(c_t)=1-\alpha$ has a unique solution in $[c_{\mathrm L},c_{\mathrm U}]$ for $0<\alpha<1$. Bisection keeps $A_t(c_{\mathrm L})\le1-\alpha\le A_t(c_{\mathrm U})$, replacing the lower endpoint when the midpoint value is below $1-\alpha$ and the upper endpoint otherwise, and stops when
\begin{equation}
\label{eq:E-stop}
0\le A_t(c_{\mathrm U})-(1-\alpha)\le\epsilon_{\mathrm{tol}}.
\end{equation}

By Corollary~\ref{cor:product}(a), the rule with threshold $c_{\mathrm U}$ has conditional Type I error $1-A_t(c_{\mathrm U})\in[\alpha-\epsilon_{\mathrm{tol}},\alpha]$ when $A_t$ is evaluated exactly. This guarantee requires the exclusion of the target pivots from $\widehat{\mathbf q}_t$, but not accurate estimates of $\mathbf q_t$ or $f_j$.

\emph{Implementation in the experiments.} The experiments floor reference probabilities at $10^{-12}$ and tabulate $\log\widehat f$ on a fixed grid in $y=-\log(1-r)$, with a density floor of $10^{-300}$. Density queries interpolate this table, and inverse queries interpolate the same table with its coordinates exchanged. Forward queries cap $r$ at $1-10^{-12}$, and inverse queries use $y\in[0,40]$. Bisection runs for $60$ steps in $\log c$ on the interval $[L_{\min}+m_t-1,\,L_{\max}+m_t+1]$, where $L_{\min}$ and $L_{\max}$ are the extreme tabulated log-densities and $m_t=\max_j\log\widehat q_{t,j}$. This interval replaces $[\log c_{\mathrm L},\log c_{\mathrm U}]$ because $c_{\mathrm L}$ can be zero. The table has $G=4{,}001$ nodes, and the decision uses the upper endpoint of the final interval. The reported BLADE results use this implementation. The decision rejects when $r_{t,j}>1-s_{t,j}(c_{\mathrm U})$ for some $j$, with the levels read from the table. The bisection keeps $\prod_j\bigl(1-s_{t,j}(c_{\mathrm U})\bigr)\ge1-\alpha$. The interpolation and the caps therefore change only the levels, and the conditional Type I error stays at most $\alpha$.

\bibliographystyle{IEEEtran}
\bibliography{references}
\end{document}